\newcommand{\Sym}[1]{\mathcal{S}(#1)}
\newcommand{\vecSop}{\hat{\vec{\sigma}}}
\newcommand{\vecJop}{\hat{\vec{S}}}
\newcommand{\rotdim}{d_{r}}
\newcommand{\exch}[1]{\hat{\rho}(#1)}
\newcommand{\derange}[1]{d(#1)}
\newcommand{\SU}{\operatorname{SU}}
\renewcommand{\O}{\operatorname{O}}
\newcommand{\SO}{\operatorname{SO}}
\newcommand{\tr}{\operatorname{tr}}
\renewcommand{\d}{\mathrm{d}}
\renewcommand{\i}{\mathrm{i}}
\newcommand{\e}{\mathrm{e}}
\theoremstyle{plain}
\newtheorem{lemma}{Lemma}
\newcommand{\UNITN}{{Dipartimento di Fisica, University of Trento, via Sommarive 14, I–38123, Povo, Trento, Italy}}
\newcommand{\TIFPA}{INFN-TIFPA Trento Institute of Fundamental Physics and Applications,  Trento, Italy}
\newcommand{\LANL}{Theoretical Division, Los Alamos National Laboratory, Los Alamos, New Mexico, 87545}
\begin{document}

\preprint{LA-UR-24-25341}

\title{Scattering Neutrinos, Spin Models, and Permutations}

\author{Duff Neill}%
 \email{dneill@lanl.gov}
\author{Hanqing Liu}%
 \email{hanqing.liu@lanl.gov}
\author{Joshua Martin}%
\affiliation{%
\LANL
}%
\author{Alessandro Roggero}%
  \affiliation{%
\UNITN\\
\TIFPA
}%
\date{\today}

\begin{abstract}
We consider a class of Heisenberg all-to-all coupled spin models inspired by neutrino interactions in a supernova with $N$ degrees of freedom. These models are characterized by a coupling matrix that is relatively simple in the sense that there are only a few, relative to $N$, non-trivial eigenvalues, in distinction to the classic Heisenberg spin-glass models, leading to distinct behavior in both the high-temperature and low-temperature regimes. When the momenta of the neutrinos are uniform and random in directions, we can calculate the large-$N$ partition function for the quantum Heisenberg model. In particular, the high-temperature partition function predicts a non-Gaussian density of states, providing interesting counter-examples showing the limits of general theorems on the density of states for quantum spin models. We can repeat the same argument for classical Heisenberg models, also known as rotor models, and we find the high-temperature expansion is completely controlled by the eigenvalues of the coupling matrix, and again predicts non-Gaussian behavior for the density of states as long as the number of eigenvalues does not scale linearly with $N$. Indeed, we derive the amusing fact that these \emph{thermodynamic} partition functions are essentially the generating function for counting permutations in the high-temperature regime. Finally, for the case relevant to neutrinos in a supernova, we identify the low-temperature phase as a unique state with the direction of the momenta of the neutrino dictating its coherent state in flavor-space, a state we dub the ``flavor-momentum-locked'' state.    
\end{abstract}
\maketitle


\section{Introduction}
The interaction between neutrinos in a supernova, neglecting non-forward scattering, is often modeled by a Heisenberg magnet with all-to-all couplings, see Refs.~\cite{Pehlivan:2011hp,PhysRevD.99.123013}. In the ultra-relativistic limit, the masses of the neutrinos can be neglected, and the scattering Hamiltonian (the 4-fermi interaction from integrating out the Z-boson) acquires a $\SU(n_f)$ global symmetry, $n_f$ being the number of neutrino flavors, allowing one to think about the system of scattering neutrinos as a $\SU(n_f)$ Heisenberg magnet, as long as we ignore scattering into new momentum modes. For a picture of what the full Hamiltonian treatment should look like, see Ref.~\cite{Cirigliano:2024pnm}. Due to the inevitable presence of both coherent and incoherent processes included in the calculation, this forward scattering model can be criticized as a model for neutrino physics (since it neglects incoherent effects coming from non-forward scattering), Refs.~\cite{PhysRevD.107.123004,johns2023neutrino}. However, we are content here, for it gives us a new class of spin models: in particular, it will give us a new approach to mean-field models of condensed matter physics, Refs.~\cite{1978JPhC...11L.871D,1980JPhC...13L.655B,PhysRevLett.70.3339,1993PhRvL..70.4011Y,Georges:2000zfr}, where the control parameter for the phase-diagram of the spin-model is the number of non-zero eigenvalues in the coupling matrix. 

In Ref.~\cite{Martin:2023gbo}, motivated by the eigenstate thermalization hypothesis of Refs.~\cite{PhysRevA.43.2046,Srednicki1994,Srednicki_1996}, it was suggested late-time values of local observables for neutrinos in a supernova should be given by the corresponding thermal averages. This motivated our study of the phase diagrams for such systems, but the commonly used models in the condensed matter literature fail for our purposes: their couplings are taken to be generic in a mean-field sense, that is, Gaussian distributed about zero with a variance of $1/\sqrt{N}$, $N$ the number of spins. This coupling matrix leads to a Gaussian partition function at high temperatures, and a spin-glass phase transition at low temperatures. In contrast, the neutrino models have what we could call ``asymptotically simple'' coupling matrices, that is, coupling matrices that do not have a number of non-zero eigenvalues scaling proportional to the number of spins. In general, these models will have non-Gaussian density of states, quantum or classical, and do not appear to have any spin-glass phase, but either have a low-temperature phase which we dub the ``flavor-momentum locked'' (FML) phase~\footnote{The name is inspired from the neutrino model: the coupling matrix is derived from the 4-momenta of the neutrinos, and the flavor state of the neutrino is locked to it.}, or exhibit no phase transition as we approach low temperatures if the coupling matrix is positive semi-definite.

In this paper, we work out the large-$N$ limit of the high-temperature expansion of the partition function for the quantum Heisenberg magnet for a set of couplings naturally suggested by the neutrino problem, where $N$ is the total number of neutrinos/sites of the magnet. We then confirm this expansion by diagonalizing the Hamiltonian over many realizations of the couplings, and verifying the predicted density of states. The partition function thus derived has a pole at a finite temperature, leading to a phase transition, and we identify the low-temperature phase. Further, this high-temperature expansion is easily generalized to a classical Heisenberg magnet with a much more generic coupling matrix, and we show the partition function at high temperatures is completely determined by the eigenvalues of the coupling matrix. Through the Leib inequalities of Ref.~\cite{1973CMaPh..31..327L}, we can give an argument that we should not expect a Gaussian density of states for any of the corresponding quantum models. We then determine the zero-temperature phase diagram for the classical neutrino Hamiltonian with slightly more general couplings, and show that when the coupling is exactly the neutrino coupling, the ground state is in the FML phase. We also argue how this zero-temperature phase diagram is consistent with the high-temperature expansion. The success of the high-temperature expansion seems to be tied to our assumption that the number of non-trivial eigenvalues of the coupling matrix is much less than the number of spins, and numerical simulations suggest the classical spin-glass mean-field analysis (Ref.~\cite{1978JPhC...11L.871D}) holds when the number of eigenvalues is proportional to the system size.   

The paper is organized as follows: in \cref{sec:systems} we define the set of systems we wish to investigate, both classical and quantum. In \cref{sec:rotor_coupling},  we work out the high-temperature expansion of the partition function for a particular quantum model. In \cref{sec:classical}, the method for this expansion is then generalized to a larger class of classical models when the number of non-zero eigenvalues in the coupling matrix is small compared with $N$. Generically, our results for the large-$N$ limit of these classical Heisenberg spin models predict a pole in the partition function determined by the largest eigenvalues of the coupling matrix, signaling a phase transition. With applications to neutrino physics in mind, we return to the neutrino Hamiltonian in \cref{sec:FML}, and we work out the structure of the low-temperature phase. While we pursue the example of where the momenta are uniformly distributed on the sphere, we have every expectation that both the phase-transition and the FML phase will be obtained for any distribution of the momenta. 

\section{The Systems}
\label{sec:systems}
In what follows, we let $\Lambda=\{1,\cdots,N\}$ be the set of all site labels, and we define the sums:
\begin{align}
\sum_{\{i_1,\cdots,i_L\}\subset \Lambda}\equiv\sum_{1\leq i_1<\cdots<i_L\leq N}\,.
\end{align}
That is, we are summing over all subsets of size $L$, with no repeated labels in the subset. To make the notation more compact, we will often drop the $\subset \Lambda$, if no confusion can result.

Inspired by the modeling of forward scattering neutrino interactions in a supernova as a quantum many-body spin system, involving all-to-all interactions between pairs of qubits (using a two-flavor approximation), we are ultimately concerned with the Hamiltonian of the form:
\begin{align}\label{eq:neutrino_H}
\hat{H}^{\nu\nu}= \sum_{\{i,j\}\subset \Lambda}\frac{1}{2N}\big(\mu_1 + \mu_2 \vec{v}_i\cdot\vec{v}_j\big) \vecSop_{i}\cdot\vecSop_j\,.
\end{align}
The $\vecSop_{i}$ are vectors of the Pauli matrices acting on site $i$, and $\vec{v}_i$ are unit vectors on the 2-dimensional sphere, distributed according to some dynamical process in the supernova, corresponding to the spatial direction of the approximately null-momentum of the neutrino. The quantum state at site $i$ is supposed to represent the flavor of the neutrino occupying that momentum mode. In what follows, we will simply refer to the $\vec{v}_i$ as the momentum. Typically, one takes $\mu_1=-\mu_2=\mu$, which results from the underlying Lorentz invariance of the neutrino interactions. Then $\mu$ encodes the densities of the neutrinos that are interacting. However, we keep it a separate parameter, as the $\mu_1$ can be related to a chemical potential for the total angular momentum of the thermal state, as used in Ref.~\cite{Martin:2023gbo}. 

Our goal is to understand the phase diagram of the partition function:
\begin{align}\label{eq:neutrino_Z}
    Z_{q}^{\nu\nu}\big(\beta;\mu_1,\mu_2,\{\vec{v}_i\}\big)&=2^{-N}\text{tr}[\exp(-\beta\hat{H}^{\nu\nu})]\,,
\end{align}
as a function of temperature and the coupling strengths $\mu_1$ and $\mu_2$, for a given distribution for the momenta. To accomplish this, we will generalize the model and consider related Hamiltonian systems. First, we generalize $\vec{v}_i$ be a random unit vector embedded in a $d$-dimensional space, corresponding to a point uniformly distributed on the $(d-1)$-dimensional sphere $\mathbb{S}^{d-1}$, i.e., the unit sphere embedded in $d$-spatial dimensions. This will allow the coupling matrix $\vec{v}_i\cdot\vec{v}_j$ to be any positive semi-definite matrix, if we take to be any integer $d\leq N$, up to shifts along the diagonals (the dropped terms where $i=j$ in Eq.~\eqref{eq:neutrino_H}) that do not effect the underlying physics. We will have an interesting system where the quantum degrees of freedom are effectively coupled to a set of classical degrees of freedom corresponding to a classical rotor model, see Ref.~\cite{1978JPhC...11L.871D}. In this case, we will develop an analytic expression for the high-temperature phase of the model, exact in the large-$N$ limit.

Alternatively, we can keep the couplings generic, and demote the quantum operators $\vecSop_{i}$ to classical unit vectors $\vec{S}_i$, and then use Ref.~\cite{1973CMaPh..31..327L} to give bounds on the quantum partition function of Eq.~\eqref{eq:neutrino_Z}, in terms of the partition function for the classical rotor model. This leads us to consider the classical Hamiltonian and its resulting partition function:
\begin{align}\label{eq:_dyn_cl_rotor_H}
H_{\rm cl.}&=\sum_{\{i,j\}\subset \Lambda}  J_{ij}\vec{S}_i\cdot \vec{S}_j\,,\\
\label{eq:_dyn_cl_rotor_Z}Z_{\rm cl.}(\beta;J)&=\int \prod_{i=1}^{N}\frac{\d^{\rotdim}\vec{S}_i}{\Omega_{d_r}}\delta(1-\vec{S}_i^2)\exp\big(-\beta H_{\rm cl.}\big)\,,
\end{align}
assuming $\Omega_{\rotdim}$ is the area of the sphere, so $Z_{\rm cl.}(0,J)=1$, and we have generalized our degrees of freedom to unit vectors embedded in $\rotdim$-dimensions, the vectors $\vec{S}_i$. In particular, if $J_{ij}=\frac{1}{2N}(\mu_1 + \mu_2 \vec{v}_i\cdot\vec{v}_j)$, and we set $\rotdim=3$, we can define the classical neutrino partition function:
\begin{align}
    H_{\rm cl.}^{\nu\nu}&=\sum_{\{i,j\}\subset \Lambda}  \frac{1}{2N}\big(\mu_1  + \mu_2 \vec{v}_i\cdot\vec{v}_j\big)\vec{S}_i\cdot \vec{S}_j\,, \label{eq:Hnu-cl}\\
    Z_{\rm cl.}^{\nu\nu}\big(\beta;\mu_1,\mu_2,\{\vec{v}_i\}\big)&=\int \prod_{i=1}^{N}\frac{\d^{\rotdim}\vec{S}_i}{\Omega_{d_r}}\delta(1-\vec{S}_i^2)\exp\big(-\beta H_{\rm cl.}^{\nu\nu}\big)\,.
\end{align}

Using the inequalities in Ref.~\cite{1973CMaPh..31..327L}, we can bound the quantum partition function \cref{eq:neutrino_Z} as
\begin{align}\label{eq:Leib_ineq}
    Z_{\rm cl.}^{\nu\nu}\big(\beta;\mu_1,\mu_2,\{\vec{v}_i\}\big)\leq Z_{q}^{\nu\nu}\big(\beta;\mu_1,\mu_2,\{\vec{v}_i\}\big)\leq Z_{\rm cl.}^{\nu\nu}\big(9\beta;\mu_1,\mu_2,\{\vec{v}_i\}\big)\,.
\end{align}
Or more generally, if we promote all sites in the quantum partition function Eq.~\eqref{eq:neutrino_Z} to a spin-$s$ representation with generators $\vecJop_{i}$:
\begin{align}\label{eq:neutrino_H_spin-s}
\hat{H}^{\nu\nu}_s&= \sum_{\{i_1,i_2\}\subset \Lambda}\frac{1}{2N}\big(\mu_1 + \mu_2 \vec{v}_i\cdot\vec{v}_j\big) \vecJop_{i}\cdot\vecJop_j\,,\\ 
[\hat{S}_{i}^{a},\hat{S}_{j}^{b}]&=i\delta_{ij}\sum_{c}\varepsilon^{abc}\hat{S}^c_{i}\,.
\end{align}
Then we have the following inequalities:
\begin{align}\label{eq:leib}
    Z_{\rm cl.}^{\nu\nu}\big(s^2\beta;\mu_1,\mu_2,\{\vec{v}_i\}\big)\leq Z_{q}^{\nu\nu}\big(\beta;\mu_1,\mu_2,\{\vec{v}_i\};s\big)\leq Z_{\rm cl.}^{\nu\nu}\big((s+1)^2\beta;\mu_1,\mu_2,\{\vec{v}_i\}\big)\,.
\end{align}
We have explicitly indicated the representation of the sites in the arguments of the quantum partition function defined with Eq. \eqref{eq:neutrino_H_spin-s}.\footnote{When comparing with the neutrino Hamiltonian using Pauli-matrices initially, rather than the angular momentum operators, one should be mindful of factors of 2 when using these inequalities.}

Finally, for compact notation, we define the partition functions:
\begin{align}\label{eq:rotor_q_z}
    Z^{\rm rot.}_{q}\big(\beta;\mu\big)=Z_{q}^{\nu\nu}\big(\beta;0,\mu,\{\vec{v}_i\}\big)\,,\\
 \label{eq:rotor_cl_z}   Z^{\rm rot.}_{\rm cl.}\big(\beta;\mu\big)= Z_{\rm cl.}^{\nu\nu}\big(\beta;0,\mu,\{\vec{v}_i\}\big)\,.
\end{align}
These partition functions are particularly interesting, since the coupling matrix is derived from a \emph{fixed} state in a classical rotor model. This automatically implies that the coupling matrices for these Hamiltonians are positive semi-definite, up to the diagonal terms. 

In what follows, we will develop the high-temperature expansion for \cref{eq:rotor_q_z} as $N\rightarrow\infty$, assuming that the momenta are randomly distributed uniformly on the sphere. This derivation will also work for the high-temperature expansion for $Z_{\rm cl.}(\beta;J)$ with a generic coupling matrix, where the roles of the coupling matrix and the quantum operators get swapped.  In the case that $\mu_2 < 0$ and $3\mu_1 > \mu_2$, and $d=3$, for the thermal system defined by Eq.~\eqref{eq:neutrino_Z} (thus including Eq.~\eqref{eq:rotor_q_z}), we will argue for a phase transition to what we call the flavor-momentum locked (FML) state occurs. This FML state can be understood semi-classically from the partition function in Eq.~\eqref{eq:rotor_cl_z} when $d=\rotdim$, where we can easily perform numerical simulations. Using Leib's inequalities of Eq.~\eqref{eq:leib} we can argue for the same phase-transition in the quantum case. 

In this paper, we take $g(x)\sim O(f(x)), x\in R$ to mean there is a constant $c$ such $|g(x)|\leq c f(x)$ in the region $R$.

\section{High-Temperature Expansion for Rotor Distributed Couplings}\label{sec:rotor_coupling}

First, we want to find the high-temperature expansion for the partition function of Eq.~\eqref{eq:rotor_q_z}. The type of Hamiltonians we then want to look at is:
\begin{align}
  \hat{H}^{\rm rot.} &= \sum_{\{i,j\}} \frac{\mu}{2N}\vec{v}_{i}\cdot\vec{v}_{j}\hat{O}_{ij}\,,
\end{align}  
where $\hat{O}_{ij}$ is some two-body operator acting on sites $i,j$ of the system. In particular, we will be interested in the cases:\footnote{We note that a very similar Hamiltonian to Eq. \eqref{eq:SK_H} was written down in Refs.\cite{Wei:2019rqy} and \cite{Delgado:2022snu} to determine the thrust-axis of a high energy scattering event in quantum chromodynamics (QCD) via quantum annealing, though with an underlying distribution of momenta set by the scattering processes of QCD.}
\begin{align}\label{eq:ex_H}
    \hat{H}^{\rm rot.}_{q} &= \sum_{\{i,j\}}\frac{\mu}{N}\vec{v}_{i}\cdot\vec{v}_{j}\exch{ij}\,,\\
    \hat{H}^{\rm rot.}_{\rm SK} &= \sum_{\{i,j\}}\frac{\mu}{2N}\vec{v}_{i}\cdot\vec{v}_{j}\hat{\sigma}_{i}^{z}\hat{\sigma}_{j}^{z}\,.\label{eq:SK_H}
\end{align}  
$\exch{ij}$ is the swap operator acting on the sites $i,j$, that interchanges their states, whose expression in terms of normal Pauli matrices is:
\begin{align}\label{eq:pauli_complete}
\hat{\vec{\sigma}}_{i}\cdot\hat{\vec{\sigma}}_j+\mathbbm{1}&=2\exch{ij}\,,\\
\exch{ij}|a\rangle_{i}\otimes |b\rangle_{j}&=|b\rangle_{i}\otimes |a\rangle_{j}\,.
\end{align}
The coupling matrix is now distributed according to the rotor model: the vectors $\vec{v}_{i}$ are of unit length and uniformly distributed on the $(d-1)$-sphere. For the forward scattering neutrino model, we often assume two neutrino flavors, but we give the generalization to $n_f$ flavors as appropriate. Our Hilbert space is then the tensor product of $N$ qudits with local dimension $n_f$, giving a total dimension of $n_f^N$ to the Hilbert space.

We now work out the high-temperature expansion of the quantum partition function when the couplings are vectors on the sphere $\mathbb{S}^{d-1}$. We will need elementary facts concerning the conjugacy class structure of symmetric groups summarized in \cref{sec:permutation_review}, as this will be directly relevant for the high-temperature expansion. Textbook treatments can be found in Refs.~\cite{grouptheory} or \cite{grouptheoryII}.

\subsection{The Annealed High-Temperature Partition Function}\label{sec:annealed}
At high temperatures, as $N\rightarrow \infty$, we can use the central limit theorem to select out a subset of diagrams that dominate the expansion of the exponential as $\beta\rightarrow 0$. Further, we expect for a given realization of the couplings, that the calculated partition function in the region of validity of the high-temperature expansion will be the same as the partition function averaged over the couplings:
\begin{align}
\lim_{N\rightarrow\infty}Z(\beta,\{\vec{v}_i\})\approx \Big\langle\!\!\Big\langle Z(\beta,\{\vec{v}_i\}) \Big\rangle\!\!\Big\rangle=\int \prod_{i=1}^{N}\d^{d}\vec{v}_i p(\vec{v}_i) Z(\beta,\{\vec{v}_i\})\,.
\end{align}
The double angle brackets indicate we average over the momentum vectors $\vec{v}_i$, and $p(\vec{v}_i)$ is the relevant distribution for the momenta, which for us will be $p(\vec{v}_i)=\delta(1-\vec{v}_i^2)/\Omega_d$. This is the so-called \emph{annealed} partition function, and this approximation can be expected to hold when we are allowed to expand the exponential of the partition function as a power series in the inverse temperature $\beta$. 
This can be expected to fail when we are outside the realm of validity of the high-temperature expansion, since this averaging is equivalent to invoking the central limit theorem order-by-order in the high-temperature expansion. That is, we expand in $\beta$, and use the central limit theorem to evaluate the sums over the couplings, thus taking the large $N$ limit after we have expanded. For a discussion of quenched vs. annealed partition functions, we refer to Ref.~\cite{Baldwin:2019dki}, as well as cautionary tales about the summation of the resulting series after the large-$N$ limit. 

We can show how the partition function simplifies in the large-$N$ limit at orders $\beta$ and $\beta^2$:
\begin{align}
    Z(\beta) &= 1-\beta\sum_{\{i,j\}}\Big(\frac{\mu}{2N}\vec{v}_{i}\cdot\vec{v}_{j}\Big)\frac{\text{tr}[\hat{O}_{i j}]}{\text{tr}[\mathbbm{1}]}+\frac{\beta^2}{2}\sum_{\{i_1,j_1\}}\sum_{\{i_2,j_2\}}\Big(\frac{\mu^2}{4N^2}\vec{v}_{i_{1}}\cdot\vec{v}_{j_{1}}\vec{v}_{i_{2}}\cdot\vec{v}_{j_{2}}\Big)\frac{\text{tr}[\hat{O}_{i_{1} j_{1}}\hat{O}_{i_{2} j_{2}}]}{\text{tr}[\mathbbm{1}]}+\cdots
\end{align}
We recall the notation $\{i,j\}$ means we have a term in our Hamiltonian acting on sites $i$ and $j$, and we sum over all such pairs such that $i < j$. In the $O(\beta)$ term, we note that the trace does not depend on $i$ or $j$ for either model in Eqs. \eqref{eq:ex_H} and \eqref{eq:SK_H}, so keeping $i$ fixed and summing over the other $N-1$ terms, we are summing over $N-1$ vectors drawn from a spherical distribution. This vanishes by the central limit theorem. This is true for each $i$, and the whole term is zero. In general, our averaging rules from the central limit theorem are:
\begin{align}\label{eq:momentum_ave_rules}
    \frac{1}{N}\sum_{i=1}^{N}v_i^{a} &= 0+O(N^{-1/2}) \,,\nonumber\\
    \frac{d}{N}\sum_{i=1}^{N}v_i^{a}v_i^{b} &= \delta^{ab}+O(N^{-1/2})\,.
\end{align}

Proceeding to order $\beta^2$, a similar argument shows that only terms with $\{i_1,j_1\}=\{i_2,j_2\}$ survive, since this ``squares'' the vectors. More generally, we have the rules:
\begin{itemize}
\item Performing the sum over sites yields a factor of $N$, which we wish to maximize.
\item Each vector associated with a site must appear at least twice, to have a non-zero average. 
\item Should a vector associated with a specific site appear more than twice, this costs a factor of $N$.
\item For a fixed set of terms in the Hamiltonian, when these operators appear in the trace at a given order of the expansion, we must sum over all possible orderings.
\end{itemize}  
These rules will enforce that the high-temperature expansion at order $L$ should be organized in terms of the conjugacy classes on the symmetric group on $L$ elements that have no fixed points, i.e., that are \emph{derangements}. This is because each vector from the couplings must appear twice, to maximize $N$ without vanishing, and via its dot product, ``points'' to the next vector, i.e., specifies a cyclic permutation. At a given order in the expansion, we then fix a set of sites. To sum over their contribution to the partition function, we must sum over all such possible disjoint products of these cyclic permutations, which is equivalent to summing over all the possible conjugacy classes and the permutations they contain. That way we cover all possibilities of how a vector might appear twice, but no more. We find:
\begin{align}\label{eq:large_N_rot_Z}
    \text{tr}[(\hat{H}^{\rm rot.})^{L}]=
      \Big(\frac{\mu}{2N}\Big)^L &\sum_{\{i_1,\cdots,i_L\}\subset\Lambda} \sum_{[\lambda_1,\cdots,\lambda_{k}]\in \derange{L}}
    \sum_{\sigma\in [\lambda_1,\cdots,\lambda_{k}]}\frac{1}{2^{k-\alpha_2}}\prod_{j=1}^{L} \vec{v}_{i_j} \cdot \vec{v}_{i_{\sigma(j)}} \tr \Big[ \prod_{j=1}^{L} \hat{O}_{i_j i_{\sigma(j)}} +\text{perms} \Big]
  \end{align}
While this formula is dense, we are simply summing over all possible subsets of size $L$ of sites of our system, and then summing over all the conjugacy classes, labeled as $[\lambda_1,\cdots,\lambda_k]$, that are in the set of the conjugacy classes $d(L)$ that are derangements, and then summing over the permutations in these classes which organize the sites that are interacting via the coupling structure. Finally, the $\alpha_2$ is just the total number of $\lambda_i$ with $\lambda_i=2$, for the given cycle structure, and we refer to \cref{sec:permutation_facts} for this alternative way of labeling conjugacy classes by the number of cycles of a given length. 
We must sum over all possible orderings of the operators in the trace, which is denoted as ``+perms'', and a more detailed description of this when $\hat{O}_{ij} = \frac{1}{2} \hat{\rho}(ij)$ can be found in \cref{sec:proving_trace_formula}. 

As an aside, all terms that survive the large-$N$ limit can be represented as a particular class of horizontal chord diagrams, as illustrated in \cref{fig:surviving_diagrams}, which we will find useful in \cref{sec:proving_trace_formula} below. In general, the bonds between horizontal lines are the two-site operators in the Hamiltonian.  When we consider the high-temperature expansion of the partition function for the Heisenberg magnet, the product of the operators from the Hamiltonian work out to be a specific permutation, and the trace evaluation of that permutation counts its cycle decomposition, see Eq. \eqref{eq:trace_rule} below taken from Ref.~\cite{Handscomb_1964}. Interestingly, the trace over the swap operator network represented in the chord diagram has an interpretation as a weight-system on braids, Refs.~\cite{1996q.alg.....7001B,Corfield:2021wwd}. In the braid interpretation of the weight system of the chord diagram, the trace counts the number of independent ropes used to form the braid (see Proposition 2.1 of Ref.~\cite{1996q.alg.....7001B}). The braid interpretation is constructed by using the swaps to define a crossing of ropes, and we further identify the beginning and end points of the horizontal lines, as if they are wrapped on a cylinder. Such an identification corresponds nicely to the trace over states: the quantum states at the end of the horizontal lines are identified and summed over. The number of independent closed loops or ropes in the braid thus formed is the number of cycles in the cycle decomposition of the permutation.

\begin{figure}
  \center\includegraphics[scale=0.6]{./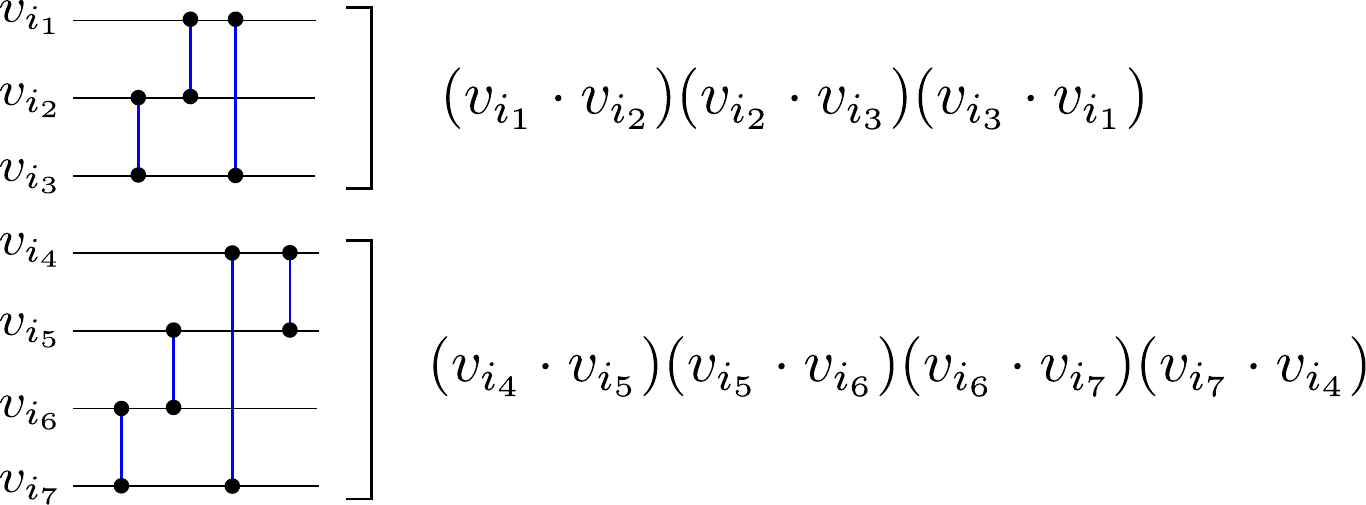}
  \caption{\label{fig:surviving_diagrams} The contribution to the partition function in the large-$N$ limit can be represented as horizontal chord diagrams. Each line in the chord diagram corresponds to the local Hilbert space of a particular spin, and the chord diagram gives the ordering of the operators we are tracing over. For the rotor-distributed couplings, only a particular class of chord diagrams can survive the large-$N$ limit. When we are considering the Heisenberg model, the operators are simply swaps, and regardless of the ordering of the swap operators in the chord diagram, the trace always evaluates to the same factor. Here we give an example contribution that would survive the large-$N$ limit after summing over groups of 7 sites, labeled by $i_1,\cdots, i_7$, which contributes at order $\beta^7$, and we indicate the coupling matrix weight that comes with the diagrams.  }
\end{figure}

\subsection{Evaluating Traces for $\hat{H}^{\rm rot.}_{\rm SK}$}
As a warm-up whose essential features carry over to the Heisenberg case, the partition function generated by Eq. \eqref{eq:SK_H} is particularly easy to compute, as all operators in the Hamiltonian commute:
\begin{align}
\hat{O}_{ij}&=\hat{\sigma}_{i}^{z}\hat{\sigma}_{j}^{z}\,,\\
2^{N-\alpha_2}\Gamma(1+L)&=\text{tr}\Big[\prod_{j=1}^{L}\hat{O}_{i_ji_{\sigma(j)}}+\text{perms}\Big]\,.\label{eq:tr_for_sk}
\end{align}  
The $2^N$ arises from the counting of states, as $\text{tr}[\mathbbm{1}]=2^N$, and we recall that each Pauli $\sigma^z_i$ is guaranteed to appear twice in the operator products, thus squaring to the identity. The factor of $2^{-\alpha_2}$ arises because of our demand that we sum over all permutations of the operators in the trace. When we have the case that the operators correspond to a two-cycle structure in the couplings, we will overcount the number of permutations, which we then correct. Counting the permutations then yields Eq. \eqref{eq:tr_for_sk}. 
After we have evaluated the trace over the operators, we can perform the sum over all the unique subsets of sites, allowing us to evaluate the sums over the vectors $\vec{v}_{i}$ in Eq. \eqref{eq:large_N_rot_Z} via the central limit theorem, \emph{before} we evaluate the sums over the permutations and the conjugacy classes. The key point is that as we have large-$N$ limit, the permutations do not depend on the specific sites being permuted, so we may evaluate the sum over sites. Each cycle in the specific permutation $\sigma$ in Eq. \eqref{eq:large_N_rot_Z} yields a factor of $d^{1-\lambda_i}$. This is because we may write:
\begin{align}
\prod_{j=1}^{L} \vec{v}_{i_j} \cdot \vec{v}_{i_{\sigma(j)}}=\vec{v}_{i_1} \cdot \vec{v}_{i_{\sigma(1)}}\times\vec{v}_{i_{\sigma(1)}} \cdot \vec{v}_{i_{\sigma^2(1)}}\times\cdots\times\vec{v}_{i_{\sigma^{\lambda_{\ell}}(1)}} \cdot \vec{v}_{i_{1}}\times\cdots
\end{align}
We have assumed the number $1$ is in the $\ell$-th cycle of $\sigma$, with length $\lambda_\ell$, and $\sigma^j$ means we have applied the permutation $j$ times. The set of vectors thus selected will not appear again in the product over the other vectors, and we may apply the central limit theorem to this set to achieve:
\begin{align}
\sum_{\{i_1,i_{\sigma(1)},\cdots,i_{\sigma^{\lambda_{\ell}}(1)}\}} \vec{v}_{i_1} \cdot \vec{v}_{i_{\sigma(1)}}\times\vec{v}_{i_{\sigma(1)}} \cdot \vec{v}_{i_{\sigma^2(1)}}\times\cdots\times\vec{v}_{i_{\sigma^{\lambda_{\ell}}(1)}} \cdot \vec{v}_{i_{1}}= \binom{N}{\lambda_{\ell}} d^{1-\lambda_\ell}\Big(1+O(N^{-1/2})\Big)\,.
\end{align}
All the other cycles in $\sigma$ can be handled similarly and independently, yielding:
\begin{align}
    \text{tr}[(\hat{H}^{\rm rot.}_{\rm SK})^{L}]=2^N\Big(\frac{\mu}{2N}\Big)^L\Gamma(1+L)\binom{N}{L}\sum_{[\lambda_1,\cdots,\lambda_{k}]\in \derange{L}} 2^{-k} \big|[\lambda_1,\cdots,\lambda_{k}] \big| \prod_{i=1}^{k}\frac{1}{d^{\lambda_i-1}}\,.
\end{align}
The binomial factor $\binom{N}{L}$ arises from counting the number of unique subsets of all sites of size $L$, and can be approximated in the large $N$ limit as
\begin{align}
  \binom{N}{L} &= \frac{N!}{(N-L)!L!} \approx \frac{N^L}{\Gamma(1+L)}
\end{align}
we have:
\begin{align}\label{eq:trace_eval_SK}
    \text{tr}[(\hat{H}^{\rm rot.}_{\rm SK})^{L}]=2^N\Big(\frac{\mu}{2d}\Big)^{L}\sum_{[\lambda_1,\cdots,\lambda_{k}]\in \derange{L}}\Big(\frac{d}{2}\Big)^{k}\Big|[\lambda_1,\cdots,\lambda_{k}]\Big|
\end{align}
We then have the partition function:
\begin{align}
Z^{\rm rot.}_{\rm SK}(\beta)&=1+\sum_{L=1}\frac{1}{\Gamma(1+L)}\Big(-\frac{\beta \mu}{2d}\Big)^{L}\sum_{[\lambda_1,\cdots,\lambda_{k}]\in \derange{L}}\Big(\frac{d}{2}\Big)^{k}\Big|[\lambda_1,\cdots,\lambda_{k}]\Big|\,.
\end{align}
We have lost the factor of $2^N$, given how we normalized the partition function. Using classic results on the combinatorics of the conjugacy classes of symmetric groups, we can recognize the generating function of the associated Stirling number of the first kind, which we reviewed in \cref{sec:gen_func_stirling}. 
Inserting $z=-\frac{\beta\mu}{2d}$ and $u=\frac{d}{2}$ into \cref{eq:gen_func_stirling}, we have:
\begin{align}
    Z^{\rm rot.}_{\rm SK}(\beta) &=\frac{\exp\big(\frac{\mu\beta}{4}\big)}{\big(1+\frac{\mu\beta}{2d}\big)^{d/2}}\Big(1+O(N^{-1/2})\Big)\,.
\end{align} 

Finally, we can perform the inverse Laplace transform to get the density of states:
\begin{align}\label{rotor_sk_density_states}
    p^{\rm rot.}_{\rm SK}(\omega) &= \frac{(2d)^{d/2}}{\Gamma(\frac{d}{2})\mu}\Big(\frac{\omega}{\mu} + \frac{1}{4}\Big)^{-1+d/2}\exp\Bigg(-2d\Big(\frac{\omega}{\mu} + \frac{1}{4} \Big)\Bigg)\Theta\Big(\frac{\omega}{\mu} + \frac{1}{4}\Big)\,.
\end{align}
\subsection{Evaluating Traces for $\hat{H}^{\rm rot.}_{q}$}
Having learned how the high-temperature expansion worked for the commuting operators, at first sight, the situation would appear hopeless for the Heisenberg Hamiltonian, since the operators in the Hamiltonian do not commute. However, we shall find a surprising result: though the operators do not commute, the large-$N$ limit selects for a particular set of operator products that evaluate to the same trace, regardless of the ordering. This will allow us to repeat the argument we gave above for the commuting operators. Focusing on the Heisenberg Hamiltonian of Eq.~\eqref{eq:ex_H}, the high-temp expansion has the form:
\begin{align}
    Z^{\rm rot.}_{q}(\beta)&=\frac{\text{tr}[\e^{-\beta\hat{H}^{\rm rot.}_{q}}]}{\text{tr}[\mathbbm{1}]}=1+\sum_{L=1}^{\infty}\frac{(-\beta)^{L}}{\Gamma(1+L)}\frac{\text{tr}[(\hat{H}^{\rm rot.}_{q})^L]}{\text{tr}[\mathbbm{1}]}.\\
   \text{tr}[(\hat{H}^{\rm rot.}_{q})^L]&= \sum_{\{i_1,j_1\}}\sum_{\{i_2,j_2\}}\cdots\sum_{\{i_L,j_L\}}\Big(\prod_{k=1}^{L}(\frac{\mu}{2N}\vec{v}_{i_{k}}\cdot\vec{v}_{j_{k}})\Big)\times\text{tr}[\exch{i_1 j_1}\cdot \exch{i_2 j_2}\cdot\,...\,\cdot \exch{i_L j_L}]
\end{align}

Note $\exch{i_1 j_1}\cdot \exch{i_2 j_2}\cdot\,...\,\cdot \exch{i_L j_L}\in\Sym{N:2L}$, with $\Sym{N:2L}$ the set of permutations on $N$ objects, with at most $2L$ objects being non-trivially permuted. $\Sym{N}=\Sym{N:N}$ is the symmetric group. The operator product can be given a disjoint cyclic decomposition, standard for permutations:
\begin{align}\label{eq:product_ex.}
\exch{i_1 j_1}\cdot \exch{i_2 j_2}\cdot\,...\,\cdot \exch{i_L j_L}=(a_{11}a_{12}\cdots a_{1 \ell_{1}})(a_{21}a_{22}\cdots a_{2 \ell_{2}})\cdots(a_{k1}a_{k2}\cdots a_{k \ell_{k}})
\end{align}  
The length of the $i$-th cycle permutation is $\ell_i$, and for each $i$, $1\leq\ell_{i}\leq \ell_{i+1}\leq N$ (the cycle decomposition of the product of swap operators is in principle distinct from the decomposition we found for the couplings, thus our use of $\ell$, rather than $\lambda$). We do not suppress the trivial one-cycles. Each number $a_{mn}\in \{1,\cdots,N\}$ labels a unique site in the system. The cycle operator $(a_{i1}\cdots a_{i\ell_{i}})$ has the following action on the product state over the sites at $a_{i1},\cdots, a_{i\ell_{i}}$:
\begin{align}
  (a_{i1}\cdots a_{i\ell_{i}})&|z_{1}\rangle_{a_{i1}}\otimes\cdots\otimes |z_{\ell_i}\rangle_{a_{i\ell_i}}\otimes |\psi\rangle\nonumber\\
  &=|z_{\ell_i}\rangle_{a_{i1}}\otimes |z_{1}\rangle_{a_{i2}}\otimes \cdots\otimes |z_{\ell_{i-1}}\rangle_{a_{i\ell_i}}\otimes |\psi\rangle\,.
\end{align}  
$|\psi\rangle$ is an arbitrary state of all the other sites. 

In the following, in order to keep track of the contribution from the number of flavors, we keep its value $n_f$ general, and $n_f=2$ corresponds to the original Heisenberg interaction. The conjugacy class of the product in Eq.~\eqref{eq:product_ex.} is $[\ell_1,\cdots,\ell_k]$, and we have for the trace evaluation (Ref.~\cite{Handscomb_1964}):
\begin{align}\label{eq:trace_rule}
\ln\Big(\text{tr}[\exch{i_1 j_1}\cdot \exch{i_2 j_2}\cdot\,...\,\cdot \exch{i_L j_L}]\Big)&=\Big(N+k-\sum_{i=1}^{k}\ell_i\Big)\ln n_{f}=k \ln n_{f}
\end{align}
That is, the trace only depends on the number of disjoint cycles in the product of the exchanges. This trace rule is easily derived from the fact that all sites being acted on in the same cycle must be in the same state to survive the trace. 

To finish calculating the partition function, we find the surprising fact that the exchange operators effectively commute under trace to the leading order in $N$. That is, the fundamental formula we need is:
\begin{align}\label{eq:trace_rule_for_cycles}
    \tr \big[\exch{i_1 i_2}\cdot \exch{i_2 i_3}\cdot\,...\cdot \exch{i_\lambda i_{1}}+\text{perms}\big]&=\Gamma(1+\lambda)n_{f}^{N+2-\lambda},\text{ for } \lambda>2\, ,
\end{align}  
where ``+perms'' means summing over all orderings of the exchange operators. We give a proof of this fact in \cref{sec:proving_trace_formula}, but here we satisfy ourselves with an illustrative example:
\begin{align}
\text{tr}\big[(12)(23)(34)(45)(15)\big]&=\text{tr}\big[(2345)(1)\big]=n_{f}^{N-5+1+1}\\
\text{tr}\big[(12)(23)(34)(15)(45)\big]&=\text{tr}\big[(15)(234)\big]=n_{f}^{N-5+1+1}
\end{align}  
That is, the first permutation has $N-5$ one-cycles, from the $N-5$ elements not involved in the permutation. The product evaluates to 1 4-cycle and 1 1-cycle. In the second case, we again have $N-5$ 1-cycles, but now the product evaluates to 1 2-cycle and 1 3-cycle. Importantly, the number of total cycles remained the same under commutation of the exchange operators. In general, when we consider the substitution of the Heisenberg (swap) interaction operators into Eq. \eqref{eq:large_N_rot_Z}, we wish to evaluate traces of the form:
{\small\begin{align}\label{traces_in_heisen_model}
    \tr \Big[\big(\exch{i_1 i_2}\cdot \exch{i_2 i_3}\cdot\,...\cdot \exch{i_{\lambda_1} i_{1}}\big)\big(\exch{i_{\lambda_{1}+1} i_{\lambda_{1}+2}}\cdot \exch{i_{\lambda_{1}+2} i_{\lambda_{1}+3}}\cdot\,...\cdot \exch{i_{\lambda_1+\lambda_2} i_{\lambda_1+1}}\big) \cdots +\text{perms} \Big]&=2^{-\alpha_2}\Gamma(1+L)n_{f}^{N}\prod_{j=1}^{k}n_{f}^{2-\lambda_j}
  \end{align}  }
Critically, because of the effective commutation of the swaps in the trace, the trace evaluates to a function that only depends on the conjugacy class enforced by the couplings. Using this formula in Eq.~\eqref{eq:large_N_rot_Z}, performing the sums over the momenta via the central limit theorem, and inserting $z=-\frac{\beta\mu}{2n_f d}$ and $u=\frac{n_f^2d}{2}$ into \cref{eq:gen_func_stirling} we can immediately deduce the partition function:
\begin{align}\label{eq:ex_part}
    Z^{\rm rot.}_{q}(\beta)\Big|_{\rm ann.}&= \frac{\exp\big(\frac{\mu n_{f}\beta}{4}\big)}{\big(1+\frac{\mu\beta}{2n_{f}d}\big)^{dn_{f}^2/2}}\Big(1+O(N^{-1/2})\Big)\,.
\end{align}  
This predicts a related density of states to the commuting Hamiltonian in Eq. \eqref{rotor_sk_density_states}:
\begin{align}\label{rotor_ex_density_states}
    p^{\rm rot.}_{\rm q}(\omega) &= \frac{(2n_fd)^{n_f^2d/2}}{\Gamma\big(\frac{ n_f^2d}{2}\big)\mu}\Big(\frac{\omega}{\mu} + \frac{n_f}{4}\Big)^{-1+n_f^2d/2}\exp\Bigg(-2dn_f\Big(\frac{\omega}{\mu} + \frac{n_f}{4} \Big)\Bigg)\Theta\Big(\frac{\omega}{\mu} + \frac{n_f}{4}\Big)\,.
\end{align}
We have appropriately normalized the density. We may check Eq.~\eqref{rotor_ex_density_states} by diagonalizing the Heisenberg Hamiltonian of Eq.~\eqref{eq:ex_H} numerically, and calculate its density of states, then averaging these densities over many realizations of the coupling. Fig.~\ref{fig:rotor_q} gives the result for $N=12$ sites, averaged over 2400 realizations, with momenta in $d=3,5,10$ dimensions, $n_f=2$, and $\mu=2$.

\begin{figure}
  \center\includegraphics[scale=0.5]{./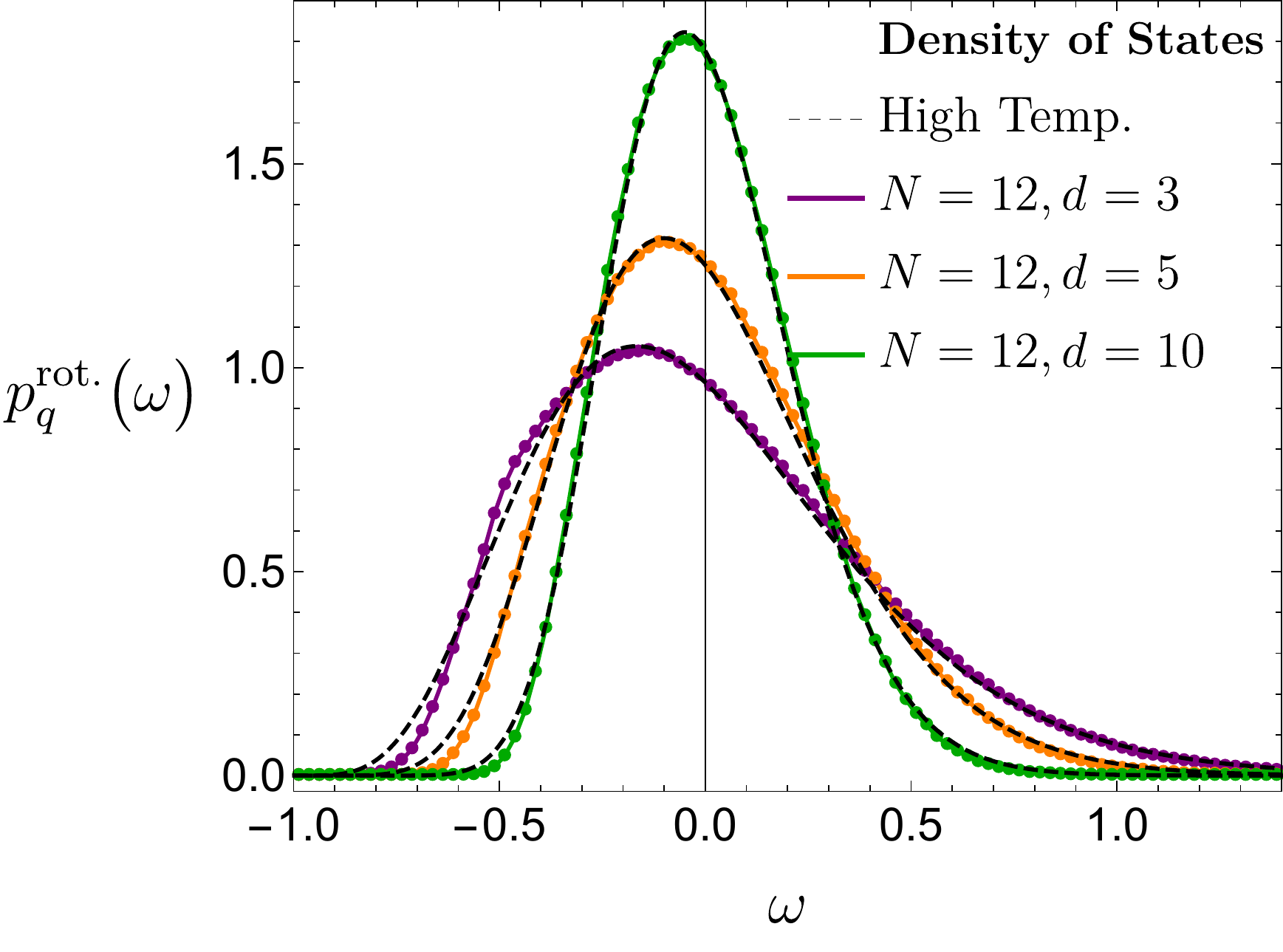}
  \caption{\label{fig:rotor_q} Numerically determined average density of states for $\hat{H}^{\rm rot.}_{\rm q}$ of Eq.~\eqref{eq:ex_H}, $\mu=2$, for $N=12$ sites, averaged over 2400 realizations, with momenta in $d=3,5,10$ dimensions. The black dashed line is Eq.~\eqref{rotor_ex_density_states}. }
\end{figure}

To contrast the derived density of states in Eqs.~\eqref{rotor_sk_density_states} and \eqref{rotor_ex_density_states}, it is interesting to compare to the results of Refs.~\cite{MON197590,Baldwin:2019dki}, and more rigorously, \cite{2014MPAG...17..441E}. There, they examine generic all-to-all spin models or $k$-body fermionic interactions with Gaussian distributed couplings. When interactions are few-body, the resulting density of states was proved to be Gaussian distributed itself. The models studied in those papers have no assumptions about $\SU(n_f)$ symmetry, but this does not appear to be the necessary ingredient to yield a non-Gaussian density of states, as can be checked from the high-temperature calculation of the free energy of spin-glass models found in Ref.~\cite{A_J_Bray_1980}. Indeed, it appears that we need the combination of the $\SU(n_f)$ symmetry and the underlying simplicity of the coupling matrix to get significant departures from Gaussian. Looking ahead to the results of Eq.~\eqref{eq:partition_classical_rotor}, we can conjecture the needed simplicity in the coupling matrix, is to require only a few non-vanishing eigenvalues. We can find some confirmation of this by examining the $d\rightarrow\infty$ limit of Eq.~\eqref{rotor_sk_density_states}. In this limit, we are increasing the number of non-trivial eigenvalues, and the maximum in the density of states begins to take a Gaussian shape, similar to the emergence of an approximate Gaussian using Laplace's method, see the $d=10$ plot in Fig. \ref{fig:rotor_q}.

\section{High-Temperature Expansion For Dynamical Rotors}\label{sec:classical}
We have seen how when we let the couplings in the quantum models be determined by unit vectors uniform on a sphere, we could calculate the partition functions in the large-$N$ limit. The self-averaging of the high-temperature expansion meant that the couplings themselves acted as if they were dynamical variables, selecting a set of terms whose structure was simply organized by the structure of conjugacy classes of the symmetric group. This observation begs the question, to what extent will our methods work for classical rotor models, where we replace our quantum operators with classical spins, and generalize our coupling matrix to be arbitrary? We are considering the case when we have the classical Hamiltonian of Eq.~\eqref{eq:_dyn_cl_rotor_H},
\begin{align}
    H_{\rm cl.}(J) = \frac{1}{2}\sum_{i,j=1}^N J_{ij}\vec{S}_{i}\cdot\vec{S}_j = \sum_{1\leq i < j \leq N}J_{ij}\vec{S}_{i}\cdot\vec{S}_j + \frac{1}{2} \tr[J]\,,
\end{align}  
and we are calculating the partition function as an average over the $\vec{S}_{i}$ on the unit sphere embedded in $\rotdim$-dimensions. The assumption on the coupling matrices that we will need for our high-temperature expansion is the following. We are given a list of real numbers $e_{\ell}, \ell=1,\cdots,M$, and we are given a generic orthogonal $N\times N$ matrix $O_{ij}$. With these, we calculate the coupling matrix of our Hamiltonian as:
\begin{align}
J_{ij}=&\sum_{k=1}^{M}O_{ik}O_{jk}e_{k},\text{ and } \big|O_{ij}\big| \leq O_{\rm max}\sqrt{\frac{\log(N)}{N}},\,\,\forall i,j \,,\ \nonumber\\
&\text{ with }M\ll N,\text{ and }\sum_{k=1}^{N}O_{ik}O_{jk}=\delta_{ij}\,.
    \label{eq:shift_rule_for_matrix}
\end{align}
For instance, random orthogonal matrices $O_{ij}$ sampled from the Haar distribution on $O(N)$ will satisfy this condition for $O_{max}=\sqrt{6}$ almost surely for large $N$~\cite{Jiang2005}. Note that the diagonals of the coupling matrix do not contribute to the physics, since they correspond to a constant shift of the Hamiltonian. Hence the physics of the partition function can only depend on the off-diagonal entries of the coupling matrix. However, as we shall see, the \emph{full} matrix $J_{ij}$ which has the eigenvalues $e_{\ell}$ will determine the structure of the partition function in the high-temperature regime. To perform the high-temperature expansion, we write:
\begin{align}
    \exp\big(-\beta H_{\rm cl.}(J)\big) = \exp\Big(-\frac{1}{2}\sum_{i=1}^{N}\beta J_{ii}\Big)\times \exp\Big(-\beta\sum_{\{i,j\}}J_{ij}\vec{S}_{i}\cdot\vec{S}_j\Big)\,,
\end{align}
and we need only expand the non-trivial term $\exp\big(-\beta\sum_{\{i,j\}}J_{ij}\vec{S}_{i}\cdot\vec{S}_j\big)$. Alternatively, we multiply the partition function by the factor $\e^{\frac{1}{2}\beta\text{tr}[J]}$.

The calculation of the partition function repeats the arguments from \cref{sec:annealed}, but the couplings playing the role of the quantum operators, and the $\vec{v}_i$ becoming $\vec{S}_i$, now with an explicit integral over the rotors, rather than invoking the central limit theorem at large $N$. We achieve:
\begin{align}\label{eq:large_N_rot_Z_cl}
    \frac{(H_{\rm cl.})^{L}}{\Gamma(1+L)}&= \sum_{\{i_1,\cdots,i_L\}\subset\Lambda} \sum_{[\lambda_1,\cdots,\lambda_{k}]\in \derange{L}}\sum_{\sigma\in [\lambda_1,\cdots,\lambda_{k}]}\frac{1}{2^k} \prod_{j=1}^{L} \Big( \vec{S}_{i_j} \cdot \vec{S}_{i_{\sigma(j)}} J_{i_ji_{\sigma(j)}} \Big)\,.
\end{align}

Integration over the $\vec{S}_i$ will again project onto derangements of the indices. So performing the integration, we get: 
\begin{align}\label{eq:large_N_rot_Z_cl_II}
 \e^{\frac{1}{2}\beta\text{tr}[J]}Z_{\rm cl.}(\beta) = 1 + \Big(-\frac{\beta}{\rotdim}\Big)^L \sum_{\{i_1,\cdots,i_L\}\subset\Lambda} \sum_{[\lambda_1,\cdots,\lambda_{k}]\in \derange{L}}\sum_{\sigma\in [\lambda_1,\cdots,\lambda_{k}]} \Big(\frac{\rotdim}{2}\Big)^k \prod_{j=1}^{L} J_{i_ji_{\sigma(j)}}
\end{align}

We note that the sums over the subsets of sites can be moved past the permutations, as the permutation only acts on the sub-index, and leaves the subset of sites we are considering invariant. We then perform the sum over the sites:
\begin{align}\label{eq:trace_rule_start}
\sum_{\{i_{1},\cdots,i_{L}\}\subset\Lambda}\prod_{j=1}^{L} J_{i_ji_{\sigma(j)}}=f_{[\lambda_1,\cdots,\lambda_{k}]}(J)\,.
\end{align}
That is, the sum over all unique subsets of the sites of size $L$ will no longer depend on the specific derangement, but just its conjugacy class. We would like to determine the function $f$, at least in the large-$N$ limit. First, we note that if we summed over all the site indices $1,\cdots, N$, without the uniqueness restriction (and we explicitly include the diagonal terms in $J$ if non-zero), we would have:
\begin{align}\label{eq:trace_rule}
    f_{[\lambda_1,\cdots,\lambda_{k}]}(J)&\approx\prod_{j=1}^{k}\tr[J^{\lambda_j}]\,.
\end{align}
Further, if $L=\lambda_1+\lambda_2+\cdots +\lambda_k\ll N$, and one attempted to use a Monte Carlo method to evaluate the trace in Eq. \eqref{eq:trace_rule}, then on average we would have no repeated sites in each sample we take (assuming a flat probability distribution over sites), which would lead to the sum in Eq. \eqref{eq:trace_rule_start} on average. Ultimately, we are summing over unique sites, this can only be approximately true. The estimated error on this approximation is:
\begin{align}
\label{eq:trace_rule_with_error}
 f_{[\lambda_1,\cdots,\lambda_{k}]}(J)&=\prod_{j=1}^{k}\tr[J^{\lambda_j}]\Big(1+O\big(L\frac{M\log(N)}{N}\big)\Big)\,.
\end{align}
We refer the reader to \cref{sec:inequalities} for a more detailed analysis of this claim, where we show this condition follows from our assumption of Eq. \eqref{eq:shift_rule_for_matrix}. With this approximation in place, we can perform an evaluation of the partition function:
\begin{align}
  \e^{\frac{1}{2}\beta\tr[J]}Z_{\rm cl.}(\beta)&=1+\sum_{L=2}\frac{1}{\Gamma(1+L)}\Big(-\frac{\beta}{\rotdim}\Big)^{L}\sum_{[\lambda_1,\cdots,\lambda_{k}]\in \derange{L}}\Big(\frac{\rotdim}{2}\Big)^{k}\Big|[\lambda_1,\cdots,\lambda_{k}]\Big|\prod_{i=1}^{k}\text{tr}[J^{\lambda_{i}}], \nonumber\\
  &= \exp\Big( \tr\big[\frac{\beta}{2} J -\frac{\rotdim}{2}\ln(1+\frac{\beta}{\rotdim}J)\big] \Big) = \prod_{\ell=1}^M \big(1 +\frac{\beta}{\rotdim} e_\ell \big)^{-\frac{\rotdim}{2}} \e^{\frac{\beta}{2} e_\ell} , \label{eq:partition_classical_rotor}
\end{align}
that is
\begin{align}
  Z_{\rm cl.}(\beta) &= \prod_{\ell=1}^M \big(1 +\frac{\beta}{\rotdim} e_\ell \big)^{-\frac{\rotdim}{2}} , \label{eq:partition_classical_rotor}
\end{align}
where the summation was accomplished with the help of \cref{eq:matrix-gen-fun} in \cref{sec:permutation_review}. The radius of convergence then is set by the largest eigenvalue of the coupling matrix, i.e., $|\beta| < d_r (\max_{\ell}{e_\ell})^{-1}$. Moreover, we see the role of the trace of the coupling matrix: it represents the generator of one-cycle contributions to the counting of permutations, and cancels the exponential factor generated by the other contributions. This representation of the partition function follows from substituting into the moment generating function for the coupling matrix $J$, and the expression of the traces in terms of the eigenvalues of $J$. 

\subsection{Lower Bound for Ground State Energy of Classical Rotors with a Positive Pole Structure}
If the locations of all the poles of \cref{eq:partition_classical_rotor} are positive, then we can find a lower bound to the ground state energy. While it may seem that we can change the values of the eigenvalues of the coupling matrix $J$ by redefining the diagonal of $J$, if $J$ satisfies \cref{eq:shift_rule_for_matrix}, the locations of the \emph{poles} remain unchanged. First, we can compute the density of states as: 
\begin{align}
   \rho(\omega) = \frac{1}{2\pi i} \int\displaylimits_{c-i\infty}^{c+i\infty}\d\beta e^{\omega\beta} Z_{\rm cl.}(\beta) &= \frac{1}{2\pi i} \int\displaylimits_{c-i\infty}^{c+i\infty}\d\beta\exp\Bigg(\omega\beta - \sum_{\ell=1}^{M} \frac{\rotdim}{2}\ln\Big(1+\frac{\beta  e_\ell}{\rotdim}\Big) \Bigg)\,.
\end{align}
We want to evaluate the density of states as we move the real part of $\beta$ to infinity. This selects the low-temperature regime, which should give the density of states near the ground state. This is equivalent to moving the contour of integration to the right in the complex plane, so sending $c\rightarrow\infty$ in the limits of integration. In particular, we evaluate the density of states at the energy:
\begin{align}\label{eq:groundstate}
    \omega_0& = 0.
\end{align}
This yields the integral:
\begin{align}
   \rho(\omega_0) &= \frac{1}{2\pi i} \int\displaylimits_{c-i\infty}^{c+i\infty}\d\beta\exp\Bigg(-\frac{\rotdim}{2}\sum_{\ell=1}^{M}\ln\Big(1+\frac{\beta e_\ell}{\rotdim}\Big)\Bigg)\,,\\
   &=\frac{1}{2\pi } \int\displaylimits_{-\infty}^{\infty}\d z\exp\Bigg(-\frac{\rotdim}{2}\sum_{\ell=1}^{M}\ln\Big(1+\frac{e_\ell}{\rotdim}(c+iz)\Big)\Bigg)\,.\label{eq:find_ground_state}
\end{align}
Now along the integration contour, the real part of the log is:
\begin{align}
\text{Re }\Big\{\ln\Big(1+\frac{e_{\ell}}{\rotdim}(c+i z)\Big)\Big\}&=\frac{1}{2}\ln\Big(1+2\frac{c e_{\ell}}{\rotdim}+\frac{c^2 e^2_{\ell}}{\rotdim^2}+\frac{e^2_{\ell}z^2}{\rotdim^2}\Big)
\end{align}
This means that the real part of the argument of the exponential in Eq.~\eqref{eq:find_ground_state} is negative on the whole contour of integration, and can be arbitrarily large by shifting $c\rightarrow\infty$. We conclude:
\begin{align}
\rho(\omega_0)&=0\,.
\end{align}
We can of course extend this argument to any energy less than $\omega_0$, concluding:
\begin{align}
\rho(\omega)&=0\,\text{ if } \omega\leq\omega_0.
\end{align}
Thus $\omega_0=0$ is a lower bound on the ground state energy. 
This lower bound agrees with using a convex relaxation argument when $J$ is positive semi-definite, as will be detailed in \cref{sec:classical-analysis}.

\subsection{Numerical Check}
We give a numerical check for our formula \eqref{eq:partition_classical_rotor} by calculating the energy of the system using the standard heat bath algorithm for when $\rotdim = 3$. We consider the classical neutrino problem\footnote{a more complicated problem illustrating the necessity of the assumption that the eigenvalues do not scale extensively with system size can be found in \cref{sec:breakdown}}, and take $\mu_1= -\mu_2 = 4$ so that the coupling matrix is $J_{ij}= \frac{1}{N}(1  - \vec{v}_i\cdot\vec{v}_j)$ with $\vec{v}_i$ uniformly distributed on the 2-sphere,\footnote{Though essentially any distribution of momenta can be chosen.} and the neutrino flavors taken to be $2$, so that the classical dynamical rotor variables are embedded in $\rotdim=3$ dimensions. The eigenvalues of $J_{ij}$ in the $N \rightarrow \infty$ limit are:
\begin{align}
\{e_{\ell}\}_{\ell=1}^{N}=\Big\{1, -\frac{1}{3}, -\frac{1}{3}, -\frac{1}{3}, 0, 0, \cdots\Big\}\,.
\end{align}
This yields from Eq.~\eqref{eq:partition_classical_rotor} the partition function:
\begin{align}\label{eq:partition_classical_neutrino}
Z_{\rm cl.}\big(\beta;1,1,\{\vec{v}_i\}\big)=\Big(1-\frac{\beta}{9}\Big)^{-9/2}\Big(1+\frac{\beta}{3}\Big)^{-3/2}\,.
\end{align}
Calculation of the energy is then straightforward. Further, we can solve for the energy using the heat bath algorithm, and compare it to the analytic expression, yielding Fig.~\ref{fig:free_e_cl_neutrino}. For a given $N$, the energy is determined from a single realization of the couplings. We see from the partition function that we can expect a phase transition at $\beta=9$, which formally can happen only at infinite $N$. Thus we calculate from the heat bath algorithm for multiple $N$, and use a simple form of the $N$ dependence at each temperature to extrapolate to infinite $N$:
\begin{align}
-\frac{\partial}{\partial\beta}\ln Z(N)=a(\beta)+\frac{b(\beta)}{N}+\frac{c(\beta)}{N^2}\,.
\end{align}
Then taking $N\rightarrow\infty$ limit yields the extrapolated curve.

\begin{figure}
  \center\includegraphics[scale=0.5]{./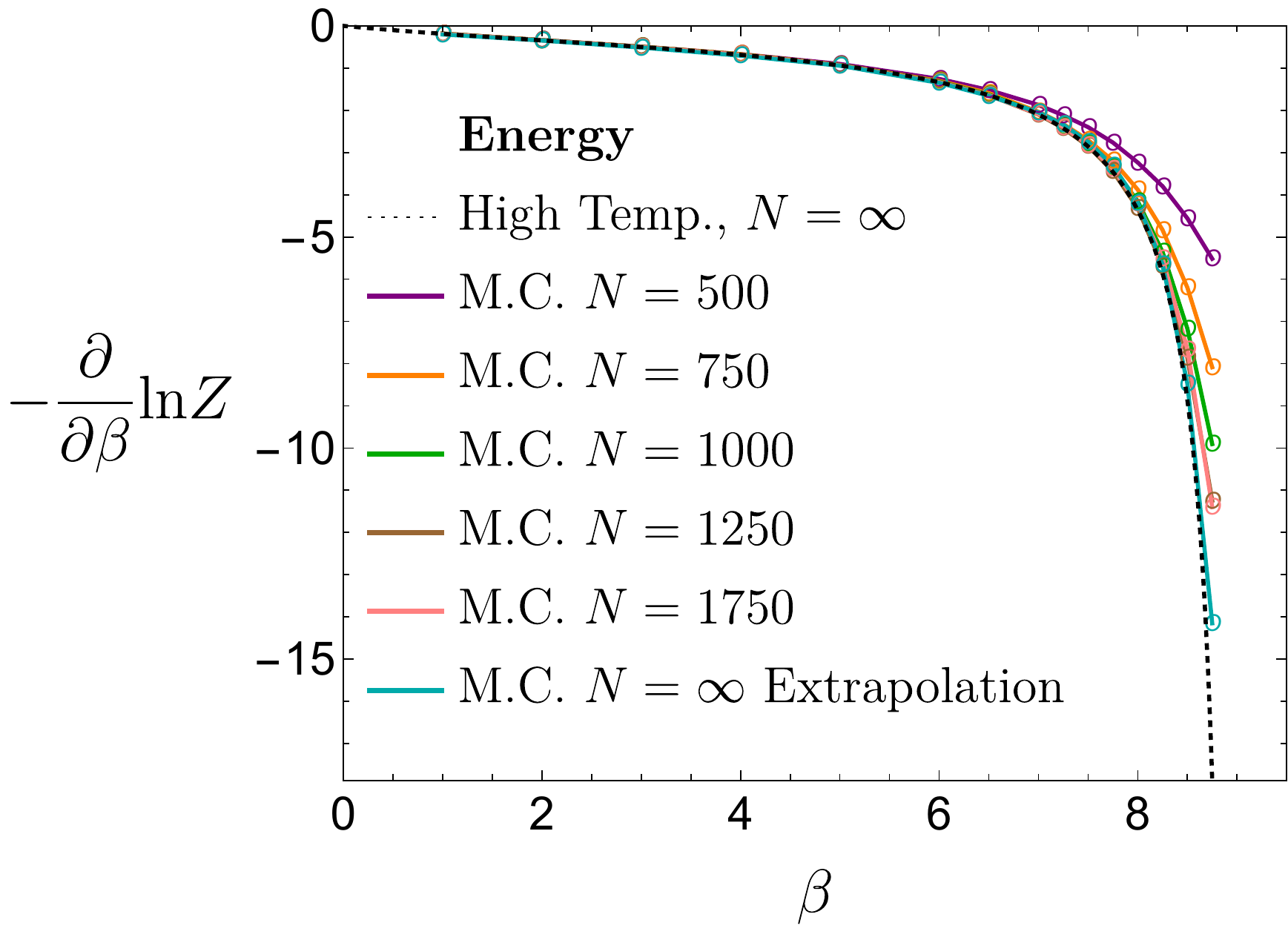}
  \caption{\label{fig:free_e_cl_neutrino} The numerically determined energy from the heat bath algorithm. For a given $N$, the free energy is determined from a single realization of the couplings. The circles are the determination of the free energy at a specific temperature, lines are linear interpolations. The statistical uncertainty is smaller than can be shown. The black dashed line is the determination of the free energy from the high-temperature expansion from Eq.~\eqref{eq:partition_classical_neutrino}.}
\end{figure}

\section{Phase Transitions and Flavor-Momentum Locked State in the Neutrino Hamiltonian}\label{sec:FML}

Our calculated partition function in the large-$N$ limit for the classical neutrino problem, \cref{eq:partition_classical_neutrino}, predicts a pole at the temperature $\beta=9$, signaling a phase-transition\footnote{It is easy to convince oneself that the other pole at $\beta=-3$ would correspond to a ferromagnetic state where all spins align.}. From the Leib inequalities, Eq.~\eqref{eq:Leib_ineq}, this implies that the quantum partition function must also diverge and undergo a phase transition. Here we elucidate the nature of this new phase, which we term the flavor-momentum locked (FML) state. 

We can also directly verify that the quantum system undergoes a phase transition with Eq.~\eqref{eq:ex_part}. In the coupling matrix for the generalized neutrino Hamiltonian of Eq.~\eqref{eq:neutrino_H}, we will see that it is inconsequential to the FML state whether $\mu_1=0$ or not. Thus the $\mu_1=0$ Hamiltonian whose partition function is calculated in the high-temperature regime in Eq.~\eqref{eq:ex_part} undergoes a phase-transition to the same FML state as the standard quantum neutrino Hamiltonian.

In what follows, we solve the full spectrum of the classical neutrino Hamiltonian, which allows us to determine the full phase diagram. This is then followed by an analysis of the classical equations of motion, to establish that the coherent-state path-integral of the quantum problem will have the same saddle-point. Moreover, we argue that as $N\rightarrow \infty$, the classical solution is in fact stable and bounded when we are sufficiently close to the exact FML state, which we check numerically.

\subsection{Classical analysis}\label{sec:classical-analysis}
The classical neutrino partition function exhibits a phase transition at $\beta=9$, and we would like to understand the phase of matter on the other side of the transition, i.e., the low-temperature regime. Recall that the classical neutrino Hamiltonian is:
\begin{align}
    H^{\nu\nu}_{\rm cl.}=\sum_{\{i,j\}}\frac{1}{2N}\big(\mu_1 + \mu_2\vec{v}_i\cdot\vec{v}_j\big)\vec{S}_i\cdot\vec{S}_j\,. \tag{re \ref{eq:Hnu-cl}}
\end{align}
In what follows, we will assume that the classical rotors and the dimension of the momenta in the coupling matrix are the same:
\begin{align}
d=\rotdim=3\,.
\end{align}

When $\beta \rightarrow \infty$, the partition function is dominated by the ground state. Since the rotors satisfy the unit norm constraint $\vec{S}_i\cdot\vec{S}_i = 1$ for each $i$, finding the ground state can be viewed as an optimization problem with constraints. Introducing the Lagrange multipliers, we have
\begin{align}
    H_\lambda := \sum_{\{i,j\}} J_{ij} \vec{S}_i\cdot\vec{S}_j - \frac{1}{2} \sum_i \lambda_i \big( \vec{S}_i\cdot\vec{S}_i - 1\big),
\end{align}
where we write the coupling matrices back as $J_{ij}$ to make the calculation look more general and also to simplify the notation. Taking derivatives with respect to $\vec{S}_i$, we have
\begin{align}\label{eq:lag-der}
    \sum_{j} J_{ij} \vec{S}_j - \lambda_i \vec{S}_i = 0.
\end{align}

In general, this equation is hard to solve. Here we make an assumption that all $\lambda_i$ are equal, i.e., $\lambda_i = \lambda$. This assumption is equivalent to a convex relaxation of the constraints: the $N$ independent constraints $\vec{S}_i\cdot\vec{S}_i = 1$ are relaxed to a single convex constraint $\sum_i\vec{S}_i\cdot\vec{S}_i = N$. (See \cite{Aizenman:1987gp} for a similar treatment of the SK model.) Solutions under this convex constraint give a lower bound of the ground state energy. However, as we will see soon, when $J_{ij} = \frac{1}{2N} (\mu_1 + \mu_2\vec{v}_i\cdot\vec{v}_j)$, there always exist solutions of this convex constraint that also satisfy the original constraint, and these solutions are the ground states.

Under the assumption $\lambda_i = \lambda$, \cref{eq:lag-der} becomes an eigenvalue equation 
\begin{align}
    \sum_{j} J_{ij} \vec{S}_j - \lambda \vec{S}_i = 0,
\end{align}
where each of the three $S_i^a: a=1,2,3$ is an eigenvector of the coupling matrix $J_{ij}$, which gives the saddle points. The minimum is reached when $S_i^a: a=1,2,3$ are eigenvector(s) associated with the smallest eigenvalue of $J_{ij}$. The energy associated with the eigenvalue $\lambda$ is given by
\begin{align}
    E_\lambda = \frac{1}{2} \sum_{i,j} J_{ij} \vec{S}_i\cdot\vec{S}_j \Big|_\lambda = \frac{1}{2} \lambda \sum_{i} \vec{S}_i\cdot\vec{S}_i = \frac{1}{2} N \lambda.
\end{align}

Now let's focus on the case $J_{ij} = \frac{1}{2N} (\mu_1 + \mu_2\vec{v}_i\cdot\vec{v}_j)$. Recall that in the large $N$ limit, we have 
\begin{align}
    \frac{1}{N}\sum_{i=1}^{N}v_i^{a} &= 0+O(N^{-1/2})\,,\nonumber\\
    \frac{d}{N}\sum_{i=1}^{N}v_i^{a}v_i^{b} &= \delta^{ab}+O(N^{-1/2}) \,. \tag{re \ref{eq:momentum_ave_rules}}
\end{align}
This implies that $\frac{1}{N} \mathbbm{1} +\frac{3}{N} \sum_{a} v_i^a v_j^a$ is a projector to the 4D subspace $\mathrm{span}\{\vec{1}, v_i^a : a = 1, 2, 3\}$. Therefore, $\frac{1}{2N} (\mu_1  + \mu_2\vec{v}_i\cdot\vec{v}_j)$ has 3 eigenvectors living in $\mathrm{span}\{v_i^a : a = 1, 2, 3\}$ associated with the eigenvalue $\mu_2/6$, 1 constant eigenvector $\vec{1}$ associated with the eigenvalue $\mu_1/2$, and $N-4$ eigenvectors that are orthogonal to $\mathrm{span}\{\vec{1}, v_i^a : a = 1, 2, 3\}$, and are associated with eigenvalue $0$.

Now we come back to check whether these eigenvectors can satisfy the original constraint $\vec{S}_i\cdot\vec{S}_i = 1$ for each $i$. For the eigenvectors living in $\mathrm{span}\{v_i^a : a = 1, 2, 3\}$, we can choose $\vec{S}_i = \vec{v}_i$, then these constraints are clearly satisfied. Moreover, if we perform a global $\O(3)$ rotation, these vectors remain in $\mathrm{span}\{v_i^a : a = 1, 2, 3\}$ and are still saddle points. Since in these solutions $\vec{S}_i$ are always equal to $\vec{v}_i$ up to a global rotation, this is why we refer to these states as \emph{flavor-momentum locked} (FML) states. They are ordered states, and the order parameter is defined as
\begin{align}
    M = \text{det}(M^{ab}), \text{ where } M^{ab}=\frac{3}{N}\sum_{i=1}^{N}v_{i}^{a}S_{i}^{b}\,. \label{eq:classical_order_param}
\end{align}

For the eigenvectors living in $\mathrm{span}\{\vec{1}\}$, we can make $\vec{S}_1$ satisfy the constraint, and all other constraints are automatically satisfied. These states are just the usual ordered states. For the eigenvectors living in the remaining $N-4$ dimensional eigen-subspace, we can always choose 3 vectors to satisfy the constraint. In fact, if we choose $\vec{S}_i$ randomly on $\mathbb{S}^2$,
it will be an eigenvector of the coupling matrix with eigenvalue $0$ in the large $N$ limit.

At different values of $\mu_1$ and $\mu_2$, different eigenvectors become the ground state. When both $\mu_1 \geq 0$ and $\mu_2 \geq 0$, the $0$-eigen subspace dominates. Therefore the ground states are completely disordered, and the system is in the paramagnetic phase. When $\mu_1 < 0$ and $3\mu_1 < \mu_2$, the $\mu_1/2$-eigen subspace dominates, and the system is ordered and in the ferromagnetic phase. When $\mu_2 < 0$ and $3\mu_1 > \mu_2$, the $\mu_2/6$-eigen subspace dominates, and the system is in the FML phase. The zero-temperature phase transitions are expected to be first order, because the ground states in different phases are orthogonal to each other. We summarize these results in the phase diagram \cref{fig:phase-diagram}. 

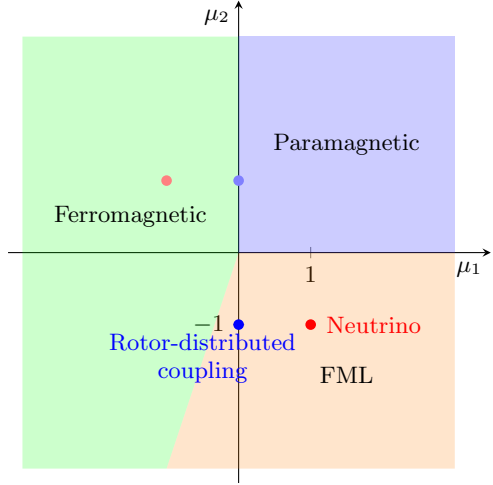
\begin{figure}
    \centering
    \begin{tikzpicture}
    \begin{axis}[
        axis lines=middle,
        xlabel={$\mu_1$},
        ylabel={$\mu_2$},
        xlabel style={at={(ticklabel* cs:1)}, anchor=north east},
        ylabel style={at={(ticklabel* cs:1)}, anchor=north east},
        xmin=-3.2, xmax=3.5,
        ymin=-3.2, ymax=3.5,
        xtick={1},
        ytick={-1},
        width=8cm, height=8cm
    ]
    
    \fill[fill=blue, opacity=0.2] (axis cs:0,0) rectangle (axis cs:3,3);
    \node at (axis cs:1.5,1.5) {Paramagnetic};

    \fill[fill=green, opacity=0.2] (axis cs:0,0) -- (axis cs: -1,-3) -- (axis cs: -3,-3) -- (axis cs: -3,3) -- (axis cs: 0,3) -- cycle;
    \node at (axis cs:-1.5,0.5) {Ferromagnetic};
    
    \fill[fill=orange, opacity=0.2] (axis cs:0,0) -- (axis cs: 3,0) -- (axis cs: 3,-3) -- (axis cs: -1,-3) -- cycle; 
    \node at (axis cs:1.5,-1.7) {FML};
    
    \fill[red] (axis cs:1,-1) circle (2pt);
    \fill[red!50] (axis cs:-1,1) circle (2pt);
    \node[anchor=west, red] at (axis cs:1.1,-1) {Neutrino};
    
    \fill[blue] (axis cs:0,-1) circle (2pt);
    \fill[blue!50] (axis cs:0,1) circle (2pt);
    \node[anchor=north, blue, align=center] at (axis cs:-0.5,-1) {Rotor-distributed\\coupling};
    \end{axis}
\end{tikzpicture}
    \caption{Zero-temperature phase diagram of the classical rotor Hamiltonian \cref{eq:Hnu-cl} in the $\mu_1$-$\mu_2$ plane. When $\mu_1 \geq 0$ and $\mu_2 \geq 0$, the ground states are completely disordered, and the system is in the paramagnetic phase. When $\mu_1 < 0$ and $3\mu_1 < \mu_2$, the ground state is ordered and the system is in the ferromagnetic phase. When $\mu_2 < 0$ and $3\mu_1 > \mu_2$, the system is in the FML phase. The dark blue and red dots denote the locations of the rotor-distributed coupling and neutrino coupling respectively. The lighter blue and red dots denote the location of corresponding couplings with signs flipped, which is equivalent to the same model but with the sign of $\beta$ flipped.}
    \label{fig:phase-diagram}
\end{figure}

It is interesting to see how this phase diagram is consistent with our high-temperature analysis. When $\mu_1=0$ and $\mu_2=-1$ (dark blue dot in \cref{fig:phase-diagram}), the coupling matrix is negative semi-definite, and the partition function only has a pole at $\beta = -\frac{18}{\mu_2}$. At low temperatures, the ground state dominates, and the system is in the FML phase. As we increase the temperature, once $\beta$ passes through $-\frac{18}{\mu_2}$, the system goes into a high-temperature paramagnetic phase. When we keep decreasing $\beta$ passing $0$ and onward, we do not encounter other poles in the partition function, therefore we expect the low-temperature phase on the other side of $\beta$, or equivalently, flipping the signs of all couplings, should still be a paramagnetic phase. This is consistent with the phase at $\mu_1=0$ and $\mu_2=1$ (shallow blue dot in \cref{fig:phase-diagram}). On the other hand, in the case of the neutrino model, we have $\mu_1>0$ and $\mu_2<0$ (dark red dot in \cref{fig:phase-diagram}). The coupling matrix has both positive and negative eigenvalues, and the partition function has poles at both positive and negative $\beta$. Therefore we expect another phase transition at negative $\beta$. Indeed, we see that when $\mu_1<0$ and $\mu_2>0$ (shallow red dot in \cref{fig:phase-diagram}), the system is in a ferromagnetic phase, which is different from the high-temperature paramagnetic phase.

We can illustrate numerically the behavior of the order parameter (Eq. \eqref{eq:classical_order_param}) as a function of temperature in Fig. \ref{fig:order_cl_neutrino} for the classical neutrino Hamiltonian, in the case $\mu_1=-\mu_2=2$. In the high-temperature regime, it is approximately zero, then
rapidly climbs after the phase transition. As the temperature drops further it slowly asymptotes to one as the system cools to the ground state. 

\begin{figure}
  \center\includegraphics[scale=0.5]{./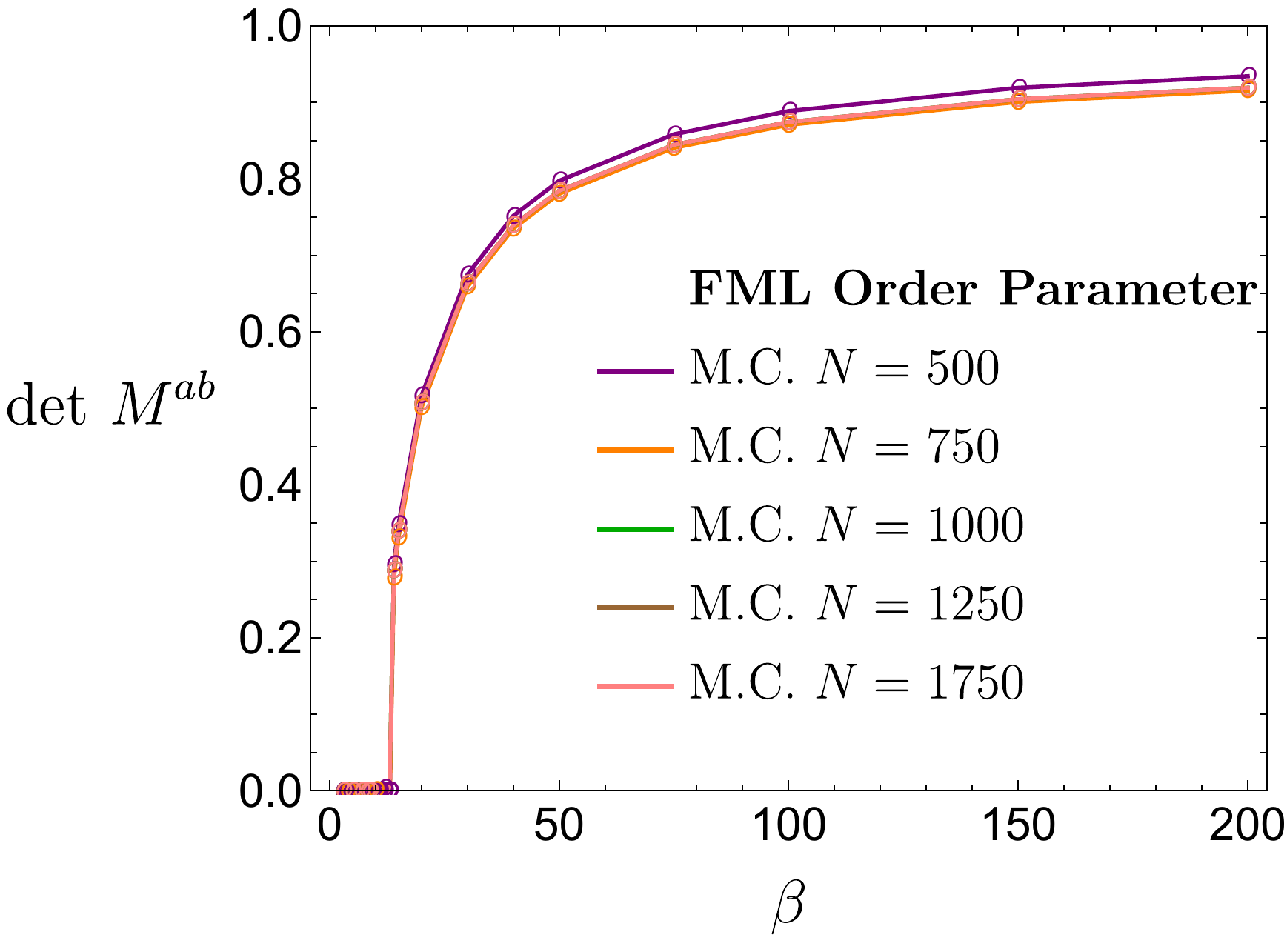}
  \caption{\label{fig:order_cl_neutrino} The FML order parameter as a function of the inverse temperature for the classical neutrino model. We see that once we pass the predicted phase transition, the parameter rapidly rises and then asymptotically approaches 1 as the temperature approaches zero.}
\end{figure}

\subsection{Quantum analysis}
We now wish to understand the phase below temperature $\beta=-2n_f d/\mu$ for the partition function in Eq. \eqref{eq:ex_part}.  To do so, we focus on the case that $d=3$, and we form the product of coherent states: 
\begin{align}\label{eq:flavor_momentum_locked}
    |\{\vec{S}_k\}_{k=1}^{N}\rangle = \bigotimes_{k=1}^{N} \e^{\i \pi \vec{S}'_k \cdot \frac{\vec{\sigma}_k}{2}} |0\rangle_k,
\end{align}
where $\vec{S}'$ is related to $\vec{S}$ by reducing its polar angle by half.

Then we have, making use of Eq. \eqref{eq:pauli_complete} and setting $\vec{S}_k=\vec{v}_k$:
 \begin{align}
\langle\{\vec{v}_k\}_{k=1}^{N}|\hat{\vec{\sigma}}_{i}|\{\vec{v}_k\}_{k=1}^{N}\rangle&=\vec{v}_i\,,\\
\langle\{\vec{v}_k\}_{k=1}^{N}|\hat{H}^{\rm rot.}_{\rm q}|\{\vec{v}_k\}_{k=1}^{N}\rangle&=\langle\{\vec{v}_k\}_{k=1}^{N}|\hat{H}^{\nu\nu}_{\rm q}|\{\vec{v}_k\}_{k=1}^{N}\rangle= \sum_{1\leq i < j \leq N} \frac{\mu}{2 N}\big(\vec{v}_{i}\cdot\vec{v}_{j}\big)^2\approx -\frac{\mu N}{4\times 3}\,,
\end{align}  
Our order parameter for the phase defined by this state is constructed from the tensor:
\begin{align}
\hat{M}^{ab}=\frac{d}{N}\sum_{i=1}^{N}v_i^{a}\hat{\sigma}^{b}_{i}\,,
\end{align}
Taking the expectation value in the state \eqref{eq:flavor_momentum_locked} we get:
\begin{align}
\langle\{\vec{v}_k\}_{k=1}^{N}|\hat{M}^{ab}|\{\vec{v}_k\}_{k=1}^{N}\rangle=\delta^{ab}.
\end{align}
If we perform a global $\SU(2)$ rotation on the state, the order parameter changes as:
\begin{align}
\langle\{\vec{v}_k\}_{k=1}^{N}|\hat{U}^{\dagger}\hat{M}^{ab}\hat{U}|\{\vec{v}_k\}_{k=1}^{N}\rangle=O^{ab},
\end{align}
where $O \in \SO(3)$.
We can simply mod out such rotations by considering the determinant on the indices $a,b$:
\begin{align}\label{eq:quantum_order_param}
    \hat{M}=\text{det}\hat{M}^{ab}\,.
\end{align}
This is the generalization of Eq.~\eqref{eq:classical_order_param}.

In disordered states away from the vectors $\vec{v}_i$, the expectation value of $\hat{M}^{ab}$ vanishes.
\subsubsection{Solution of Classical Eqn. of Motion}
Here we do not write down the coherent state path integral for the neutrino system, but we refer the reader to Ref. \cite{Balantekin:2006tg}. For our purposes, we simply need the saddle points to be given by the solutions to the classical equations of motion, derived with the correct Poisson bracket to respect the $\SU(2)$ symmetry of the system. If these solutions are stable, then the saddle points are minima, and we have found a configuration that dominates the ground state.

Focusing on the neutrino case of $\mu_1=-\mu_2=\mu$, the classical equations of motion are given as: 
\begin{align}\label{eq:full_classical_eq_motion}
    \frac{\partial}{\partial t}\vec{S}_{i}&=\sum_{j=1,\neq i}^N\frac{\mu}{2N}(1-\vec{v}_i\cdot\vec{v}_j)\vec{S}_{i}\times\vec{S}_{j}\,.
\end{align}
This is seen either from the classical limit of the quantum system, or demanding that the rotors $\vec{S}_i$ satisfy the Poisson bracket:
\begin{align}
    \{S_i^{a},S_j^b\} = \delta_{ij}\sum_{c=1}^{3}\varepsilon^{abc}S^c_i\,,
\end{align}
with the Levi-Civita symbol $\varepsilon^{abc}$ providing the structure constants of the Lie Group $\SO(3)$. If we expand $\vec{S}$ near the FML state as $\vec{S}_i=r_i\vec{v}_i+\vec{\theta}_i$, and substitute it into the equations of motion with $\vec{\theta}_i$ small and solving $r_i$ to constrain $\vec{S}_i$ to be on the sphere, we get:
\begin{align}
    \frac{\partial}{\partial t}\vec{\theta}_{i} &= \sum_{j=1,\neq i}^N\frac{\mu}{2N}(1-\vec{v}_i\cdot\vec{v}_j)\big(\vec{v}_{i}\times\vec{\theta}_{j}-\vec{v}_{j}\times\vec{\theta}_{i}\big)+O(\theta^2)
\end{align}
The dangerous term is $\vec{v}_{i}\times\vec{\theta}_{j}$, as this can generate forces that would move $\vec{\theta}_i$ in directions away from the $\vec{v}_i$ neighborhood. To see how we need not worry, we write out the explicit indices on the above Eq.:
\begin{align}
    \frac{\partial}{\partial t}\theta^a_{i}&=\sum_{j=1,\neq i}^N\frac{\mu}{2N}(1-\sum_{e=1}^{3}v^e_i v^e_j)\Big(\sum_{b,c=1}^{3}\varepsilon^{abc}v^b_{i}\theta^c_{j}-\varepsilon^{abc}v^b_{j} \theta^c_{i}\Big)+O(\theta^2)
\end{align}

But we note that $\vec{v}_j$ is assumed perpendicular to $\vec{\theta}_j$, and at any given time, we can assume it is essentially in a random orientation about $\vec{v}_j$. So when we perform the sum over $j$, we conclude this term will vanish:
\begin{align}
\sum_{j}v_j^{e}\theta_j^c\rightarrow N(0+O(N^{-1/2}))\,.
\end{align}
We conclude by then applying the rule in Eq.\eqref{eq:momentum_ave_rules}:
\begin{align}
\frac{\partial}{\partial t}\vec{\theta}_{i}&\approx \frac{\mu}{6}\vec{v}_{i}\times\vec{\theta}_{i}+O(\theta^2)+O(N^{-1/2})
\end{align}
Thus $\vec{\theta}_i$ will precess about $\vec{v}_i$, and the potential forces which could drive it away from a neighborhood about $\vec{v}_i$ average to zero at infinite $N$. At finite $N$, they do not exactly vanish, but cannot act coherently to push $\vec{\theta}_i$ far out of the initial neighborhood of $\vec{v}_i$, but do allow it to explore the whole neighborhood. We can illustrate this fact by numerically solving\footnote{Using the standard Runge-Kutta-Fehlberg (RKF45) method, with a local error tolerance of $5\times 10^{-6}$. } Eq. \eqref{eq:full_classical_eq_motion} for $N=750$ neutrinos with $\mu=1$. We start in each neutrino in a flavor state that is within an angle of $10^{-3}$ of its corresponding momentum, and evolve till $t_f=500$, yielding an order parameter that is $O=0.95$. We show the conformal (stereographic) projection of the first neutrino's state trajectory onto the plane tangent to the sphere at the direction $\vec{v}_i$ for $i=1$ and $750$ in Fig. \ref{fig:classical_fml_evo}. The $\theta_x$ and $\theta_y$ axes are arbitrary directions that parametrize the tangent plane to the sphere at the point defined by each neutrino's momentum, and are not defined with any absolute orientation.

In the left panel of \cref{fig:classical_fml_evo_compare}, we give the comparison of the FML trajectory with order parameter $O=0.95$ to that of trajectory with an order parameter $O=0.45$, using the same couplings now only for the first neutrino. We can see that the initial conditions with $O=0.45$ do not stay confined to a limited neighborhood about the momentum direction, and very rapidly change in time. In the right panel of \cref{fig:classical_fml_evo_compare}, we give the comparison of the FML trajectory with order parameter $O=0.95$ to that of a random initial state, using the same couplings and again for the first neutrino, but now evolved to $t_f=2000$. The random initial state has an order parameter that is essentially zero. The random initial conditions now evolve much more slowly than the trajectory with $O=0.45$, but are not confined as the order parameter $O=0.95$ trajectories. 

\begin{figure}
  \center\includegraphics[scale=0.33]{./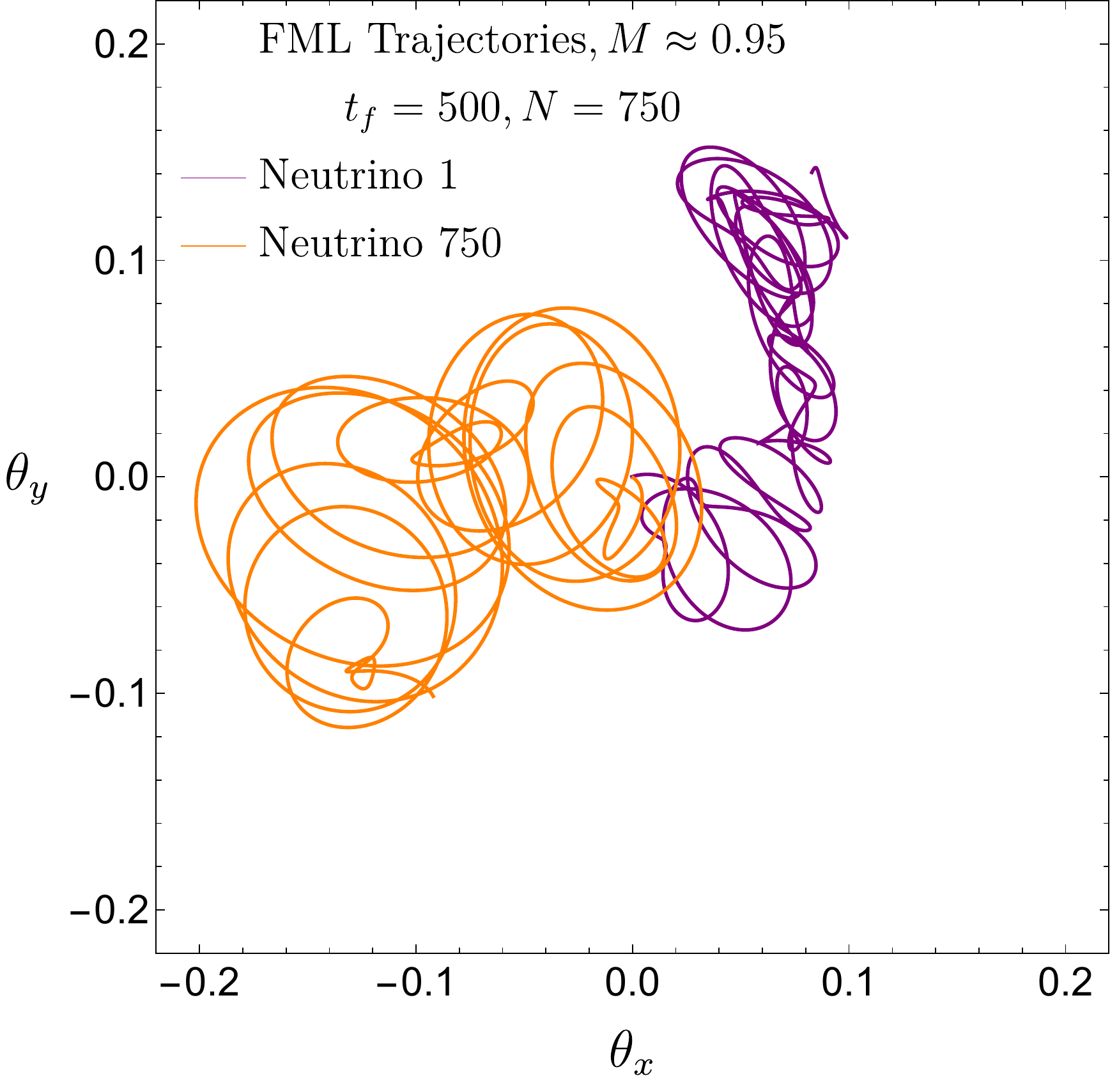}
  \caption{\label{fig:classical_fml_evo} Classical trajectories of the neutrino 1 and 750 when the initial state is close to saturating the order parameter of Eq. \eqref{eq:classical_order_param}. Note that the trajectories stay confined to a small neighborhood about the corresponding momentum direction. We plot the stereographic projection of the flavor-sphere of the neutrino onto the plane tangent to the point given by the momentum direction.  The orientation of the tangent planes is random for the purposes of plotting.}
\end{figure}

\begin{figure}
 \centering
    \begin{subfigure}[t]{0.4\textwidth}
        \centering
        \includegraphics[width=0.81\textwidth]{./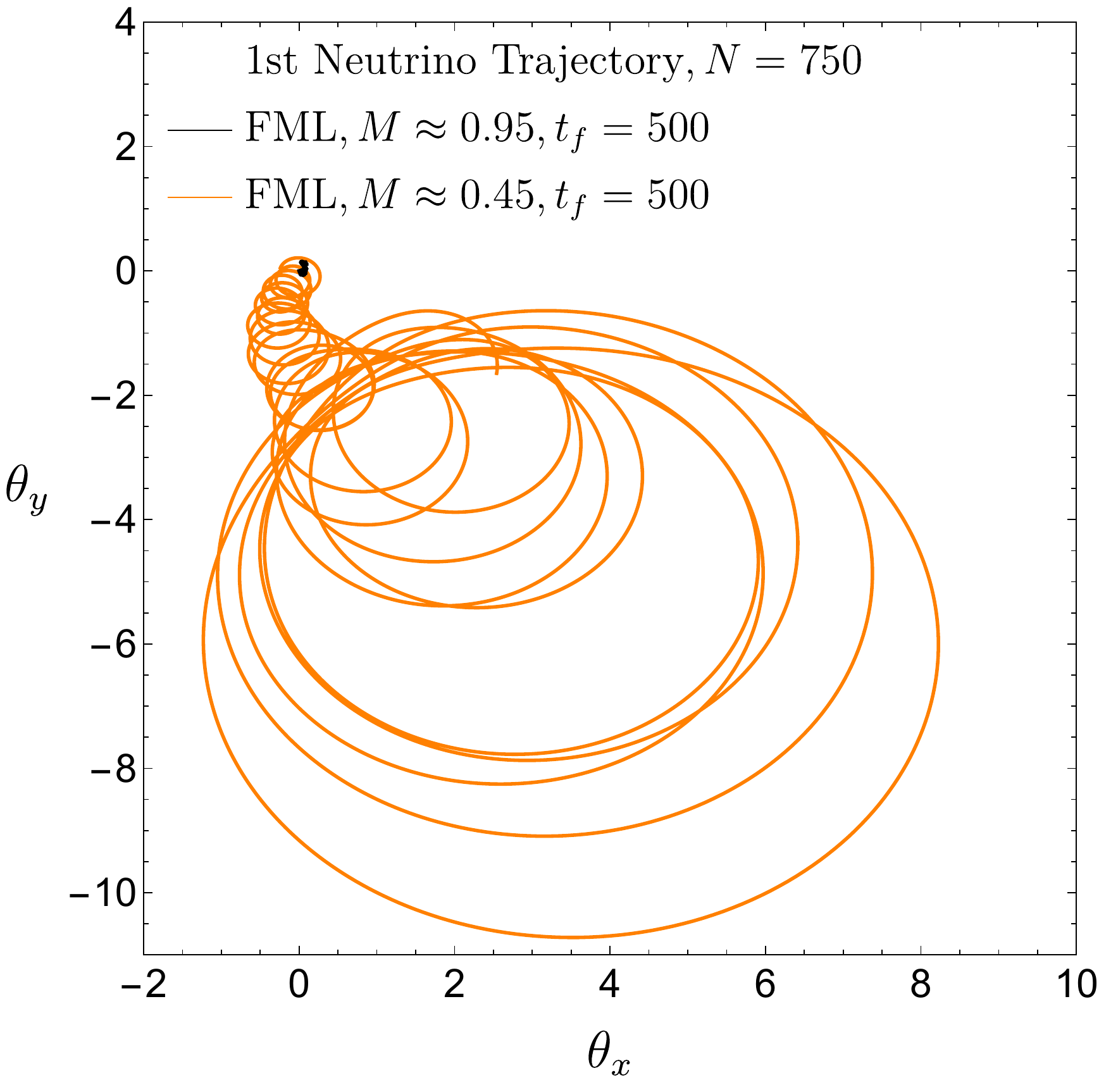}
        \label{fig:classical_fml_evo_compare_almost}
    \end{subfigure} \quad
    \begin{subfigure}[t]{0.4\textwidth}
        \centering
        \includegraphics[width=0.78\textwidth]{./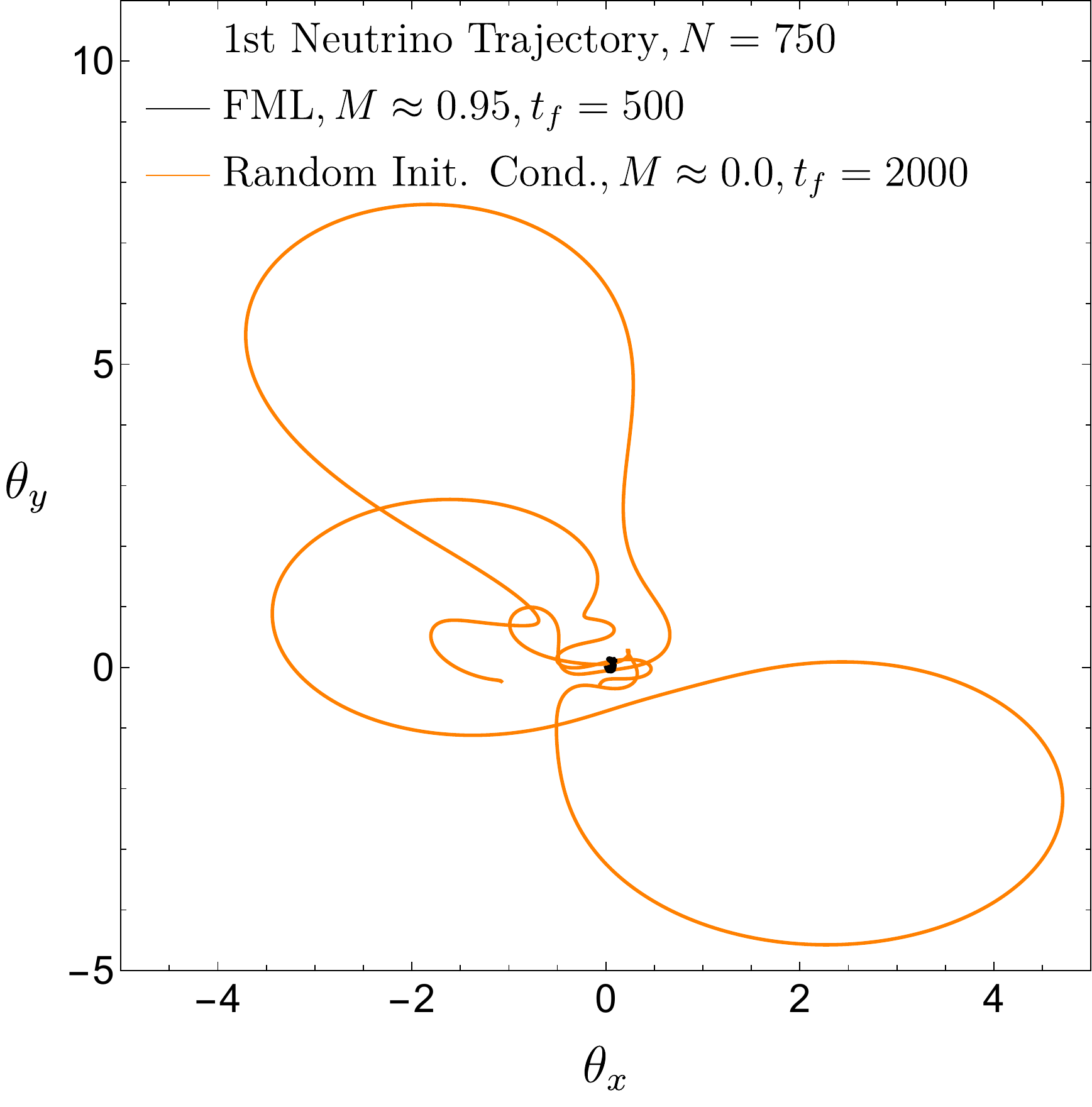}
        \label{fig:classical_fml_evo_compare_non_fml}  
    \end{subfigure} \caption{Comparison of the classical trajectories of the neutrino 1 for two distinct initial states: The black curves in both panels are when the initial state is close to saturating the order parameter of Eq. \eqref{eq:classical_order_param}. We plot the stereographic projection of the flavor-sphere of the neutrino onto the plane tangent to the point given by the momentum direction. Left panel: the orange curve is when the order parameter is $\approx 0.45$, and thus there is a fair degree of correlation between the momenta and flavor, but not enough to confine the trajectory. Right panel: the orange curve is when the order parameter is $\approx 0$, and with no correlation between the momenta and flavor, the trajectory evolves slowly.}    
    \label{fig:classical_fml_evo_compare}
\end{figure}

\section{Conclusions}
Motivated by the forward scattering model of neutrino interactions in supernovae and the fact that the late-time expectation values of observables should be determined by thermal ensembles, the so-called eigenstate thermalization hypothesis, we have given a sketch of the phase diagram for ``asymptotically'' simple Heisenberg magnets with all-to-all connectivity. By asymptotically simple, we mean that the eigenvalues of the coupling matrix are never too large or too many in comparison to the large-$N$ limit. We calculated the quantum partition function in the high-temperature limit for a model possessing a large degree of symmetry (in both the coupling distribution and the Hamiltonian), yet is still quantum chaotic, see Ref.~\cite{Martin:2023gbo}, and established a phase transition to the so-called flavor-momentum-locked (FML) state, where there is a large correlation between the directions used to define the random coupling matrix and states of the magnet/neutrinos.
 
For these simple magnets, we can show the largest eigenvalue induces a pole in the classical partition function, implying that even in the quantum case, we must have a phase transition via the Leib inequalities. We gave some example phases for this transition, including the FML phase, and studied the zero-temperature phase diagram. Going forward, one would like to know whether this FML phase generalizes: does the matrix defined on the sites $\vec{S}_i\cdot\vec{S}_j$ lock onto some suitably defined projection of the coupling matrix $J_{ij}$ as we lower the temperature, and how does the number of eigenvalues of the coupling matrix control this phase-transition?

Perhaps our most surprising result, from a condensed-matter point of view, is the construction of classical models whose density of states is manifestly \emph{not} Gaussian up to the phase transition. Further, we have no reason to expect the corresponding quantum density of states to be Gaussian, as this would be difficult to square with the Leib inequalities on the partition function. We note we do not violate established theorems of Refs. \cite{MON197590} and \cite{2014MPAG...17..441E}, since there the coupling matrices manifestly fail our condition on the number of eigenvalues. This does suggest an interesting new question: do we have a type of phase transition or cross-over in the shape of the density of states, like that outlined in Ref.~\cite{2014MPAG...17..441E}, as we tune the size or number of eigenvalues of the coupling matrix, and what is the precise condition on the coupling matrix for this to occur? 

With regards to neutrino physics, an important future consideration is to work out the relationship between the FML phase and the ``fast-flavor'' instabilities, discussed in Refs.~\cite{Abbar:2017pkh,Yi:2019hrp,Morinaga:2021vmc,Fiorillo:2023mze}. Due to the degree of coherence between the initial flavor state and the momentum directions, when we are close to the phase transition as measured by the order parameter, we have found that each neutrino can rapidly explore its available flavor space classically, while deep in the FML phase, the flavor is locked to a neighborhood defined by the momentum, though evolution is still ``fast''. No assumptions of axial symmetry, matter backgrounds, or integrability are needed for the rapid flavor evolution, and indeed, the underlying distribution of couplings we chose leads to chaos in general, Ref.~\cite{Martin:2023gbo}. 

\section{Acknowledgements}
We'd like to thank Scott Lawrence, Yukari Yamauchi, and Joe Carlson for many conversations, and Francesco Caravelli for discussions on spin glasses. This work was supported by the Quantum Science Center (QSC), a National Quantum Information Science Research Center of the U.S. Department of Energy (DOE) and by the U.S. Department of Energy, Office of Science, Office of Nuclear Physics (NP) contract DE-AC52-06NA25396.

\bibliography{biblio}

\appendix

\section{Conjugacy classes of permutations and their generating functions}\label{sec:permutation_review}
In this appendix, we review some basic facts of the conjugacy classes of permutations and derive the generating function \cref{eq:partition_classical_rotor}.

\subsection{Basic facts of the conjugacy classes of permutations}\label{sec:permutation_facts}

Fix $L$ objects to be permuted, and we denote the set of all such permutations as $\Sym{L}$. Any such permutation can be decomposed into disjoint cycles. The cycle structure is preserved under conjugation, and can be used to label the conjugacy classes of the permutations. That is, let $\sigma$ and $\rho$ be permutations, so that:
\begin{align}
    \rho=(a_{11}\cdots a_{1\lambda_1})(a_{21}\cdots a_{2\lambda_2})\cdots (a_{k1}\cdots a_{k\lambda_k}),\\
    \sigma^{-1}\rho\sigma=\big(\sigma(a_{11})\cdots \sigma(a_{1\lambda_1})\big)\cdots \big(\sigma(a_{k1})\cdots \sigma(a_{k\lambda_k})\big)\,.
\end{align}
Each $a_{ij}$ is in the set of $L$ objects and $\rho$ maps $a_{ij}$ to $a_{i (j\mathrm{mod}\lambda_i+1)}$. Thus the sequence of integers $[\lambda_{1},\cdots,\lambda_{k}]$, with $\lambda_{1}\leq\cdots\leq\lambda_{k}$, specifies a conjugacy class. When $\lambda_{i} \geq 2$ for all $i$, the permutation is called a derangement, which has no fixed point. We use $\derange{L}$ to denote the set of the derangement conjugacy classes. Thus, for example, we have: $\derange{6}=\{[2,2,2],[2,4],[3,3],[6]\}$.

We note that any conjugacy class can also be specified by counting its number of disjoint cycle types. When a permutation is in the symmetric group on $L$ elements, then we can also use the notation $[1^{\alpha_1}2^{\alpha_2}\cdots L^{\alpha_L}]$ to specify the conjugacy class. This notation means that permutations in that class have $\alpha_\ell$ number of disjoint cycles of length $\ell$. This implies the constraint $L=\sum_{\ell=1}^{L}\ell\alpha_{\ell}$. Thus we have in our $L=6$ example, we have $[2,2,2]\equiv [1^{0}2^{3}3^{0}4^{0}5^{0}6^{0}]$. In general, we have two ways of specifying the conjugacy class the relations
\begin{align}
  [\lambda_1,\cdots,\lambda_{k}]&\equiv [1^{\alpha_1}2^{\alpha_2}\cdots L^{\alpha_L}]\,,\\
  \sum_{\ell=1}^{L}\alpha_{\ell}&=k\,,\\
  \sum_{\ell=1}^{L}\ell\alpha_{\ell}&=\sum_{i=1}^{k}\lambda_i=L\,,\\
  \Big|[\lambda_1,\cdots,\lambda_k]\Big| &= \frac{\Gamma(1+L)}{\Gamma(1+k) \prod_{i=1}^{k} \lambda_i} , \label{eq:k-cycle-size}\\
  \Big|[1^{\alpha_1}2^{\alpha_2}\cdots L^{\alpha_L}]\Big| &= \frac{\Gamma(1+L)}{\prod_{\ell=1}^{L}\ell^{\alpha_\ell}\Gamma(1+\alpha_\ell)}\,.
\end{align}
In the last line, we give the number of elements in the specified conjugacy class. The number of derangements of order $L$ that contains exactly $k$ cycles is known as associate Stirling numbers of the first kind, denoted as $d(L,k)$. 

Using the generating function of $d(L,k)$ weighted by the coupling matrix is a crucial step in summing the high-temperature series expansion in \cref{eq:partition_classical_rotor}. In order to illustrate the idea with simpler calculations, we review the derivation of the generating function of $d(L,k)$ in the next subsection. The generating function weighted by the coupling matrix then directly follows.

\subsection{The generating function of $d(L,k)$}\label{sec:gen_func_stirling}

In this subsection, we derive the generating function of $d(L,k)$, i.e., associate Stirling numbers of the first kind, using the composition of generating functions. Since any permutation can be decomposed into a product of cycles, the basic building blocks are single cycles, whose number of elements is counted by $d(L,1)$. Therefore we derive the generating function of $d(L,1)$ first. Since the rotation of a cycle does not change the permutation, the number of $L$-cycles is $d(L,1) = (L-1)!$. However, since we will not count 1-cycles, we have to manually set $d(1,1)=0$. Therefore the generating function of $d(L,1)$ is 
\begin{align}
    \sum_{L=2}^{\infty} \frac{d(L,1)}{L!} z^L = \sum_{L=2}^{\infty} \frac{1}{L} z^L - z = -\ln(1-z) - z.
\end{align}

$|[\lambda_1,\cdots,\lambda_k]|$ counts the number of elements with cycle structure $[\lambda_1,\cdots,\lambda_k]$. Therefore we can write
\begin{align}
    \big|[\lambda_1,\cdots,\lambda_k]\big| = \frac{1}{k!} \binom{L}{\lambda_1, \cdots \lambda_k} d(\lambda_1,1) \cdots d(\lambda_k,1),
\end{align}
where $\binom{L}{\lambda_1, \cdots \lambda_k}$ is the multinomial coefficient that counts the ways to put $L$ numbers into $k$ cycles with size $\lambda_1, \cdots \lambda_k$. The factor $\frac{1}{k!}$ is because the order of the cycles does not matter. It can be easily checked that this expression is identical to \cref{eq:k-cycle-size}.

$d(L,k)$ counts the number of derangements with $k$ cycles whose total length is $L$. Therefore $d(L,k)$ is simply a sum over $[\lambda_1,\cdots,\lambda_k]$ with $\lambda_i \geq 2$, 
\begin{align}
    d(L,k) = \sum_{\substack{\lambda_1+\cdots+\lambda_{k}=L\\\lambda_k\geq\cdots\geq \lambda_1\geq 2}} \big|[\lambda_1,\cdots,\lambda_k]\big| = \frac{1}{k!} \sum_{\substack{\lambda_1+\cdots+\lambda_{k}=L\\\lambda_k\geq\cdots\geq \lambda_1\geq 2}} \binom{L}{\lambda_1, \cdots \lambda_k} d(\lambda_1,1) \cdots d(\lambda_k,1),
\end{align}

Therefore, it can be checked that the generating function of $d(L,k)$ is $k$-th power of the generating function of $d(L,1)$, divided by $k!$,
\begin{align}
    \sum_{L=2}^{\infty} \frac{d(L,k)}{L!} z^L = \frac{1}{k!} \Big(\sum_{L=2}^{\infty} \frac{d(L,1)}{L!} z^L \Big)^k = \frac{1}{k!} \big(-\ln(1-z) - z\big)^k.
\end{align}
Summing over all $k$, we have
\begin{align}\label{eq:gen_func_stirling}
    1 + \sum_{k=1}^{\infty} \sum_{L=2}^{\infty} \frac{d(L,k)}{L!} z^L u^k = \sum_{k=0}^{\infty} \frac{1}{k!} \big(-\ln(1-z) - z\big)^k u^k = \e^{-u\ln(1-z) - uz} = (1-z)^{-u} \e^{-uz}.
\end{align}

\subsection{The generating function weighted by the coupling matrix}

When deriving the high-temperature partition function \cref{eq:partition_classical_rotor} for the dynamical rotors, we sum over the series expansion with each cycle structure weighted by $\prod_{i=1}^k \tr(J^{\lambda_i})$. Deriving such a generating function only requires minor modifications to the calculation in the previous subsection. Again, let's start with the case of single cycles,
\begin{align}
    \sum_{L=2}^{\infty} \frac{d(L,1) \tr(J^L)}{L!} z^L = \tr\Big(\sum_{L=2}^{\infty} \frac{d(L,1)}{L!} (zJ)^L \Big) = \tr(-\ln(1-zJ) - zJ).
\end{align}
In the case of $k$ cycles, we have
\begin{align}
    \sum_{L=2}^\infty \frac{1}{L!} \sum_{\substack{\lambda_1+\cdots+\lambda_{k}=L\\\lambda_k\geq\cdots\geq \lambda_1\geq 2}} \big|[\lambda_1, \cdots, \lambda_k] \big| \prod_{i=1}^k \tr(J^{\lambda_i}) z^L &= \sum_{L=2}^\infty \frac{1}{L!} \sum_{\substack{\lambda_1+\cdots+\lambda_{k}=L\\\lambda_k\geq\cdots\geq \lambda_1\geq 2}} \frac{1}{k!} \binom{L}{\lambda_1, \cdots \lambda_k} \prod_{i=1}^k d(\lambda_i,1) \tr(J^{\lambda_i}) z^L \nonumber\\
    &= \frac{1}{k!} \Big(\sum_{L=2}^{\infty} \frac{d(L,1) \tr(J^L)}{L!} z^L \Big)^k.
\end{align}
Summing over $k$, we have
\begin{align}\label{eq:matrix-gen-fun}
    1 + \sum_{k=1}^{\infty} \sum_{L=2}^\infty \frac{1}{L!} & \sum_{\substack{\lambda_1+\cdots+\lambda_{k}=L\\\lambda_k\geq\cdots\geq \lambda_1\geq 2}} \big|[\lambda_1, \cdots, \lambda_k] \big| \prod_{i=1}^k \tr(J^{\lambda_i}) z^L u^k = \exp\Big( u\tr\big(-\ln(1-zJ) - zJ\big) \Big) \nonumber\\
    &= \exp\Big( \sum_{\ell=1}^M u \big(-\ln(1-z e_\ell) - z e_\ell \big) \Big) = \prod_{\ell=1}^M (1-z e_\ell)^{-u} \e^{-uz e_\ell} ,
\end{align}
where $e_\ell: \ell = 1, \cdots, M$ are the eigenvalues of $J$. Note that this summation is convergent only when $|z| < (\max_{\ell}{e_\ell})^{-1}$.

\section{Proof of \cref{eq:trace_rule_for_cycles}} \label{sec:proving_trace_formula}
In this appendix, we provide a proof of \cref{eq:trace_rule_for_cycles}, which follows from the following more general result:
\begin{lemma}
    Let $\rho$ be a product of swaps $(ij)$, such that each index appears exactly twice. Then the number of disjoint cycles in $\rho$ does not depend on the order of the swaps.
\end{lemma}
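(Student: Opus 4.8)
The plan is to reformulate the claim as a statement about a graph built from the swaps, and then track how the cycle count changes when two adjacent swaps in the product are exchanged. Given a product $\rho = (i_1 j_1)(i_2 j_2)\cdots(i_L j_L)$ in which each of the $2L$ slots carries an index and each index value occurs in exactly two slots, I would associate to it a multigraph $G$ on the set of distinct index values, with one edge for each swap. Since each index appears exactly twice, $G$ is a disjoint union of cycles and isolated edges (every vertex has degree exactly $2$, unless an index appears in a single swap as both entries — but ``each index appears exactly twice'' forces degree $2$ everywhere, counting a loop twice). The number of connected components of $G$ is an obvious reordering invariant, since $G$ does not depend on the order of the factors. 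So the strategy is to show: the number of disjoint cycles of the permutation $\rho$ (including fixed points among the indices actually involved, or not — one must fix a convention and be consistent) is determined by the number of components of $G$, \emph{independently of the order}. Equivalently, I want a formula of the shape $\#\mathrm{cycles}(\rho) = (\text{something depending only on }G)$.

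The cleanest route is an induction on $L$ via the standard fact that multiplying a permutation by a transposition $(ab)$ changes the number of cycles by exactly $\pm 1$: it merges two cycles if $a,b$ lie in the same cycle... wait, it \emph{splits} one cycle into two if $a,b$ lie in the same cycle of the current product, and \emph{merges} two cycles into one if they lie in different cycles. I would peel off the last swap $(i_L j_L)$ and consider $\rho' = (i_1j_1)\cdots(i_{L-1}j_{L-1})$. The catch is that in $\rho'$ the indices $i_L$ and $j_L$ now each appear exactly once, so $\rho'$ is not of the required form, and the inductive hypothesis does not literally apply. To repair this I would instead prove a slightly stronger statement allowing a bounded number of ``once-appearing'' indices, or — better — phrase the induction on the number of swaps while carrying along the full information of $G$. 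The key local computation: suppose two consecutive swaps $(ab)$ and $(cd)$ appear adjacently in the product and we exchange them. This only affects the permutation if $\{a,b\}\cap\{c,d\}\neq\varnothing$ (disjoint transpositions commute, so the count is trivially unchanged). In the overlapping case, say $b=c$, one checks directly from the $3$-element (or $4$-element) local action that $(ab)(bd)$ and $(bd)(ab)$ have the same cycle type on $\{a,b,d\}$ — indeed $(ab)(bd) = (adb)$ and $(bd)(ab) = (abd)$, both $3$-cycles — hence the global cycle count is unchanged because conjugating a block by... no: the point is more simply that these two products differ by a relabeling that is invisible to cycle-\emph{counting}. I would verify the handful of overlap patterns ($b=c$; $b=d$; $a=c$; $a=d$; and the double-overlap $\{a,b\}=\{c,d\}$, which gives the identity either way) and observe that in every case the number of disjoint cycles of the length-$L$ product is unchanged. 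Since any reordering of the factors is a composition of adjacent transpositions of factors, the invariance follows.

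The main obstacle I anticipate is bookkeeping rather than conceptual: making the ``adjacent swap exchange'' argument airtight requires knowing that swapping two \emph{non}-adjacent factors can be achieved by a sequence of adjacent swaps, \emph{each of which keeps us inside the class of products relevant to the lemma} (it does — the multiset of swaps never changes, so the condition ``each index appears exactly twice'' is preserved throughout), and it requires being careful about whether an overlap between the two swaps being exchanged could ``reach around'' and interact with the rest of the product. The resolution is that the cycle count of $\rho$ equals the cycle count of any cyclic rotation $(i_{m}j_{m})\cdots(i_Lj_L)(i_1j_1)\cdots(i_{m-1}j_{m-1})$ as well — because $\tr$ and cycle number are conjugation-invariant and a cyclic rotation of a product is a conjugate of it — so combined with adjacent-transposition invariance one gets invariance under the full symmetric group acting on the factor positions. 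I would present the local overlap cases as a short explicit table and then invoke generation of $S_L$ by adjacent transpositions to conclude. An alternative, perhaps slicker, finish: show directly that $\#\mathrm{cycles}(\rho) = N - L + c(G)$ where $c(G)$ is the number of components of the multigraph $G$ on the involved indices together with the $N$-minus-(involved) untouched points; prove this by the $\pm1$ transposition rule plus the observation that adding an edge to $G$ either connects two components (merge: cycle count $+1$... and $c(G)$ drops by $1$, consistent) — the parity works out — and since the right-hand side manifestly does not depend on the order, neither does the left. This second approach has the advantage of also \emph{evaluating} the trace, matching Eq.~\eqref{eq:trace_rule}, which is presumably what the appendix goes on to do.
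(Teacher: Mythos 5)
Your primary line of attack — reduce to exchanging two adjacent swap factors and check the overlap cases — is exactly the strategy of the paper's proof, but the crucial step is missing and the justification you offer for it is incorrect. Checking that $(ab)(bd)=(adb)$ and $(bd)(ab)=(abd)$ have the same cycle type \emph{as permutations of} $\{a,b,d\}$ does not by itself say anything about the cycle count of the full products $\rho_2(ab)(bd)\rho_1$ versus $\rho_2(bd)(ab)\rho_1$: the two $3$-cycles are different permutations, and multiplying on the same $\rho_1,\rho_2$ can in general change the global cycle structure. Your assertion that ``these two products differ by a relabeling invisible to cycle-counting'' is false — they are not conjugate in general (in fact the resulting cycle \emph{types} can differ; only the cycle \emph{count} is invariant, which is the whole content of the lemma). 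What closes the gap, and what the paper uses, is the observation that the hypothesis ``each index appears exactly twice'' forces the shared index $b$ to occur \emph{only} in those two adjacent swaps, so $b$ is fixed by both $\rho_1$ and $\rho_2$. One then traces explicitly that exchanging $(ab)\leftrightarrow(bc)$ simply moves the isolated letter $b$ from one position to another in the cycle decomposition, which preserves the number of disjoint cycles whether or not the two insertion points lie in the same cycle. Without invoking the ``$b$ fixed by the rest'' fact, the adjacent-exchange step is not justified, and the cyclic-rotation remark you offer as a ``resolution'' does not help (adjacent transpositions already generate $S_L$; the issue is the adjacent step itself, not generation).

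Your ``alternative, slicker finish'' is gesturing at the paper's second, geometric argument (remove one chord, the rest is a tree/path and hence a single cycle; the last chord splits it in two), but as written it doesn't work: the proposed formula $\#\mathrm{cycles}(\rho) = N - L + c(G)$ is wrong. For a single connected $\lambda$-cycle in $G$ with $N=L=\lambda$ one gets $\#\mathrm{cycles}(\rho)=2$ (the paper's worked example), while your formula gives $1$. The correct order-invariant identity is $\#\mathrm{cycles}(\rho) = (N-L) + 2\,c_{\mathrm{cyc}}$, where $c_{\mathrm{cyc}}$ is the number of cycle components of $G$ restricted to the involved indices; and proving this \emph{without} already assuming order-independence requires the tree-of-transpositions fact, not the naive merge/split-versus-component-count bookkeeping you sketch (those two processes do not track each other step by step, since $a,b$ can be in the same component of $G_t$ but in different cycles of the partial product).
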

\begin{proof}
    Since any order of the swap operators can be obtained by a sequence of swapping pairs of the nearest swap operators, it suffices to consider swapping one pair of swap operators. When these two swaps do not share common indices, clearly they commute. Therefore we only need to consider the case when they share a common index. More concretely, we only need to consider the products $\rho_2 (cb) (ba) \rho_1$ and $\rho_2 (ab) (bc) \rho_1$, where $\rho_{1,2}$ are other fixed product of swaps. Suppose in permutation defined by $\rho_1$, $a_1$ is mapped to $a$ and $c_1$ is mapped to $c$. If $a$ does not appear in $\rho_1$, then $a_1=a$, and similar for $c$. Further, we suppose in the permutation defined by the product of swaps in $\rho_2$, $a$ is mapped to $a_2$ and $c$ is mapped to $c_2$. Furthermore, from the assumption that every index appears exactly twice, we know that $\rho_1$ and $\rho_2$ do not contain $b$.

    In the case $\rho_2 (cb) (ba) \rho_1$, we know that $a_1$ is mapped to $c_2$, while $c_1$ is mapped to $b$, and $b$ is mapped to $a_2$, following our multiplication rules. In terms of cyclic notation, we have $\rho_2 (cb) (ba) \rho_1 = (\cdots a_1c_2 \cdots c_1 b a_2 \cdots)$. Similarly, $\rho_2 (ab) (bc) \rho_1 = (\cdots a_1 b c_2 \cdots c_1 a_2 \cdots)$. Therefore, we see that in terms of cyclic notation, commuting $(ab)$ with $(bc)$ simply moves $b$ from one place to another, which does not change the number of cycles. In particular, when $a_1c_2$ and $c_1 b a_2$ belong to the same cycle, the cycle structure is unchanged; When $a_1c_2$ and $c_1 b a_2$ belong to different cycles, one cycle length is increased by 1, while the other is decreased by 1. This concludes the proof.
\end{proof}

Now we take our set of swap operators from the large-$N$ expansion that forms a connected chord diagram of the form depicted in Fig.~\ref{fig:surviving_diagrams}:
\begin{align}
C=\{\exch{i_1 i_2}, \exch{i_2 i_3}, \,\dots\, ,\exch{i_{\lambda} i_{1}}\}\,.
\end{align}
We note that if we label the operators as:
\begin{align}
   \hat{\rho}_{1} = \exch{i_1 i_2}, \hat{\rho}_{2} =\exch{i_2 i_3},\, \dots\, , \hat{\rho}_{\lambda} = \exch{i_{\lambda} i_{1}}\,.
\end{align}
Then the sum over all permutations of the operators may be written as:
\begin{align}
    \exch{i_1 i_2}\cdot \exch{i_2 i_3}\cdot\,...\cdot \exch{i_{\lambda} i_{1}}+\text{perms}=\sum_{\sigma\in\Sym{\lambda}}\prod_{i=1}^{\lambda} \hat{\rho}_{\sigma(i)}\,.
\end{align}
We can calculate the trivial ordering simply:
\begin{align}
    (i_1 i_2)(i_1 i_2) \cdots (i_{\lambda} i_{1}) = (i_1)(i_2 i_3 \cdots i_{\lambda}).
\end{align}
Applying the lemma, we know the product of these operators in any order will have two disjoint cycles. Therefore we have
\begin{align}
    \tr \big[\exch{i_1 i_2}\cdot \exch{i_2 i_3}\cdot\,...\cdot \exch{i_{\lambda} i_{1}}+\text{perms}\big]&=\Gamma(1+\lambda)n_{f}^{N+2-\lambda},\text{ for } \lambda>2\,. \tag{re \ref{eq:trace_rule_for_cycles}}
\end{align}  

Rather than the above algebraic argument, there is a geometric argument one can give using the chord diagrams. First we consider the product $\prod_{i=1}^{\lambda-1} \hat{\rho}_{\sigma(i)}$ for some fixed permutation $\sigma$, where we drop the last swap $\hat{\rho}_{\sigma(\lambda)}$ from the product. This product of swap operators can be represented by chord diagrams of the form shown in Fig.~\ref{fig:chordcycles}. One can compute the permutation generated by the product of swaps by starting at any numbered line, and then as one moves along from left to right on the horizontal line, one changes to a new line as soon as one hits a dot, following the vertical line up or down to where it ends (swapping lines), and then continue along accordingly. Because the swap $\hat{\rho}_{\sigma(\lambda)}$ has been neglected in the product, the diagram will be planar, regardless of the ordering of swaps: there is never a return path to a line that you have previously been on. We can relabel the lines if necessary to guarantee the swap network will be planar on the chord diagram, see Proposition 2.1 of Ref.~\cite{1996q.alg.....7001B} where it is shown that a permutation of the labels of the lines does not affect the value of the weight of the chord diagram. This implies there are no sub-cycles to the permutation defined by the product of swaps given by these planar diagrams. Thus products $\prod_{i=1}^{\lambda-1} \hat{\rho}_{\sigma(i)}$ must always evaluate to a single cycle, and then multiplying by the final swap will split the cycle into two, 
\begin{align}
\label{eq:split_cycle_b}(ac_1...c_k b c_{k+1}...c_{\ell})(ab)&=(bc_1...c_k) (a c_{k+1}...c_{\ell}).
\end{align}
Going back to the braid interpretation of Ref.~\cite{1996q.alg.....7001B}, the chord diagram, regardless of how the ends are permuted and tied, our chord diagrams will give a braid that is composed of two distinct ropes.

\begin{figure}
  \centering\includegraphics[scale=0.8]{./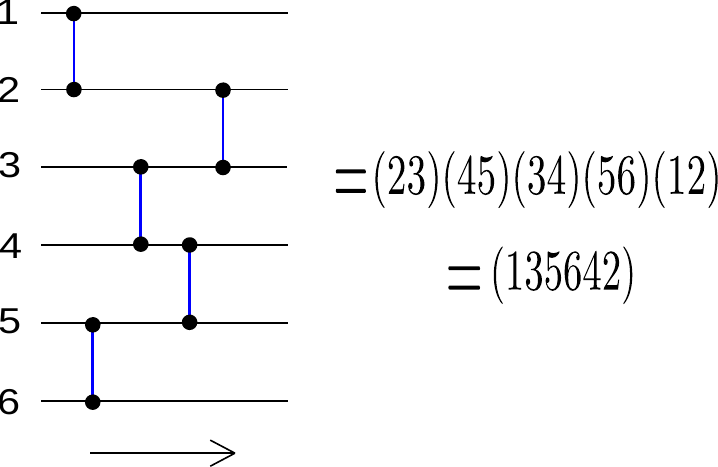}\qquad\qquad\includegraphics[scale=0.8]{./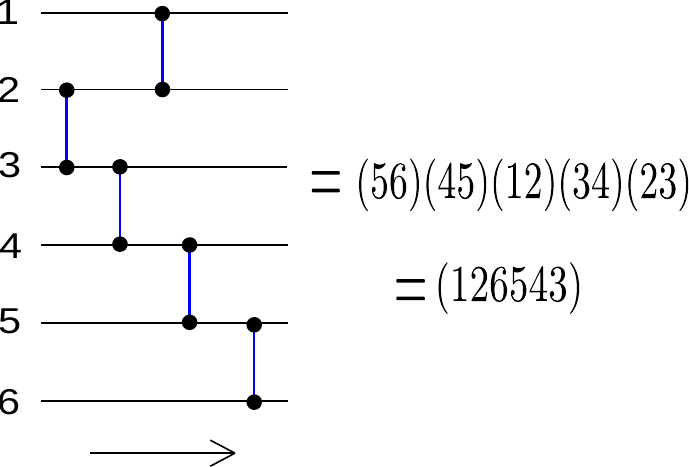}
  \caption{\label{fig:chordcycles}  We illustrate how to understand the trace rule of Eq.~\eqref{eq:trace_rule_for_cycles} in terms of chord diagrams. We delete one swap, in this case $(16)$, and we can always draw the chord diagram corresponding to the left-over product of swaps in a planar fashion. This implies we can never return to the line we started on when following the horizontal lines, here indicated by a left-to-right motion, and following the rerouting indicated by the inserted swap. Thus the product of all the other swaps must be a single cycle. Multiplying by the last swap $(16)$ will break it into two distinct cycles.}
\end{figure}

\section{Trace Moments of the Coupling Matrix and Numerical Checks}\label{sec:simple_couplings}
In this appendix, we illustrate how the formula in Eq. \eqref{eq:trace_rule_with_error} can break down 
when the number of eigenvalues scales with the number of spins. We recall the assumption we have on the coupling matrix in \cref{eq:shift_rule_for_matrix}.
We are given a list of real numbers $e_{\ell}, \ell=1,\cdots,M$, and we are given a generic orthogonal $N\times N$ matrix $O_{ij}$. With these we calculate the coupling matrix of our Hamiltonian as:
\begin{align}
J_{ij}=&\sum_{k=1}^{M}O_{ik}O_{jk}e_{k},\text{ and } \big|O_{ij}\big| \leq O_{\rm max}\sqrt{\frac{\log(N)}{N}},\,\,\forall i,j \,,\ \nonumber\\
&\text{ with }M\ll N,\text{ and }\sum_{k=1}^{N}O_{ik}O_{jk}=\delta_{ij}\,.
    \tag{re \ref{eq:shift_rule_for_matrix}}
\end{align}
and we wish to show for a permutation $\sigma$ in the conjugacy class $[\lambda_1,\cdots,\lambda_k]$ that
\begin{align}\label{eq:trace_rule_recall}
\sum_{\{i_{1},\cdots,i_{L}\}\subset\Lambda}\prod_{j=1}^{L} J_{i_ji_{\sigma(j)}}=f_{[\lambda_1,\cdots,\lambda_{k}]}(J)\approx \prod_{i=1}^{N}\text{tr}[J^{\lambda_i}]\,.
\end{align}
We then finish with some numerical checks, verifying that $M\ll N$ is indeed necessary.

\subsection{Trace Moment Inequalities of the Coupling Matrix}\label{sec:inequalities}

 We can easily show the necessity of our conditions on the coupling matrix, simply consider:
\begin{align}\label{eq:trace_rule_example}
    f_{[2]}(J)&=\tr[J^{2}]-\sum_{i=1}^{N}J_{ii}^2\,.
\end{align}
We then use Jensen's inequality:
\begin{align}
    \frac{1}{N} \sum_{i=1}^{N}J_{ii}^2\geq \frac{1}{N^2}\Big(\sum_{i=1}^{N}J_{ii}\Big)^2 = \frac{1}{N^2} (\tr J)^2 \sim \Big(\frac{M}{N}\Big)^2\,.
\end{align}
We can see the origin of our demand that $M\ll N$.
For a further illustrative example, we can look at the fourth-order sum when we set the diagonals of the coupling matrix to zero, $J_{ii}=0$:
\begin{align}\label{eq:trace_rule_analyze}
    f_{[4]}(J)&=\sum_{i\neq j\neq k\neq\ell}J_{ij}J_{jk}J_{k\ell}J_{\ell i},\nonumber\\
    &=\sum_{i,j,k,\ell}\Big(J_{ij}J_{jk}J_{k\ell}J_{\ell i}-\frac{2}{N}J_{ij}^2 J_{jk}^2+\frac{1}{N^2}J_{ij}^4\Big)\,.
\end{align}
The trace of the fourth power of the coupling matrix has the scaling:
\begin{align}
    \sum_{i,j,k,\ell} J_{ij}J_{jk}J_{k\ell}J_{\ell i} = \tr(J^4) = \sum_{\ell=1}^M e_\ell^4 = O(M).
\end{align}
We now wish to lower bound the terms in Eq. \eqref{eq:trace_rule_analyze} which prevents the identification in Eq. \eqref{eq:trace_rule_with_error}. Again to use Jensen's inequality, we define the vector $w_i=\sum_{j} J_{ij}^2$. This will show $\sum_{i}w_i^2=\sum_{i,j,k} J_{ij}^2 J_{ik}^2$ is lower bounded by $\frac{1}{N} \tr(J^2)^2 \sim \frac{M^2}{N}$. Therefore, we see that if $M \sim N$, then these two terms are actually of the same order; if $M \sim N^{1/2}$, then the first term is order $N^{1/2}$ and the second term is order $1$.

We now argue that in fact our condition is sufficient. To be explicit, we denote the matrix that actually controls the physics, the off-diagonals, as $\tilde{J}_{ij}$. We recall the assumptions on it above in Eq. \eqref{eq:shift_rule_for_matrix}.
Note that the matrix $\tilde{J}_{ij}$ with $\tilde{J}_{ii}=0$ does \emph{not} have eigenvalues $e_k$, but the matrix defined for all $i,j$ as:
\begin{align}
J_{ij}&=\sum_{k=1}^{M}O_{ik}O_{jk}e_{k},\text{ for all } i,j\,,
\end{align}
does have eigenvalues $e_k$. We then consider the simple case that the permutation $\sigma$ is a single cycle of length $\lambda$, so that we can write without loss of generality:
\begin{align}
    f_{[\lambda]}[\tilde{J}]&=\sum_{\{i_1,\cdots,i_\lambda\}}\tilde{J}_{i_1 i_2}\tilde{J}_{i_2 i_3}\cdots\tilde{J}_{i_\lambda i_1} = \sum_{\{i_2,\cdots,i_\lambda\}} \sum_{l=1, l\not\in \{i_2,\cdots,i_{\lambda}\}}^{N}\tilde{J}_{\ell i_2}\tilde{J}_{i_2 i_3}\cdots\tilde{J}_{i_\lambda \ell}\,.
\end{align}
We have frozen the sites $i_{2},\cdots,i_{\lambda}$, and will attempt to perform the sum over $i_1=\ell$ given these sites. Thus we are lead to consider the product-sum:
\begin{align}
\sum_{l=1, l\not\in \{i_2,\cdots,i_{\lambda}\}}^{N}\tilde{J}_{i_\lambda l}\tilde{J}_{l i_2}=\sum_{l=1, l\not\in \{i_2,\cdots,i_{\lambda}\}}^{N}\sum_{k_1=1}^{M}\sum_{k_2=1}^{M}O_{i_\lambda k_1}O_{\ell k_1}  O_{\ell k_2}O_{i_2 k_2}e_{k_1}e_{k_2}\,.
\end{align}
Now we have explicitly:
\begin{align}
\sum_{l=1, l\not\in \{i_2,\cdots,i_{\lambda}\}}^{N}O_{\ell k_1}  O_{\ell k_2}=\delta_{k_1 k_2}-\sum_{t=2}^{\lambda}O_{i_t k_1}  O_{i_t k_2}\,.
\end{align}
Therefore:
\begin{align}
\sum_{l=1, l\not\in \{i_2,\cdots,i_{\lambda}\}}^{N}\tilde{J}_{i_\lambda l}\tilde{J}_{l i_2}=\sum_{\ell=1}^{N}J_{i_{\lambda}\ell}J_{\ell i_{2}}-\sum_{t=2}^{\lambda}\sum_{k_1=1}^{M}\sum_{k_2=1}^{M}O_{i_\lambda k_1}O_{i_t k_1}  O_{i_t k_2}O_{i_2 k_2}e_{k_1}e_{k_2}\,.
\end{align}
This second term is bounded as:
\begin{align}
\Big|\sum_{t=2}^{\lambda}\sum_{k_1=1}^{M}\sum_{k_2=1}^{M}O_{i_\lambda k_1}O_{i_t k_1}  O_{i_t k_2}O_{i_2 k_2}e_{k_1}e_{k_2}\Big|\leq \lambda\|J\|^2O_{\rm max}^4\frac{\log(N)^2 M^2}{N^2} \,.
\end{align}
with $\|\cdot\|$ the spectral norm. Note that we have a different bound for the first term:
\begin{align}
\Big|\sum_{\ell=1}^{N}J_{i_{\lambda}\ell}J_{\ell i_{2}}\Big|\leq \|J\|^2O_{\rm max}^2\frac{M\log(N)}{N}\,.
\end{align}
Hence we can expect that our relative error in moving from the matrix $\tilde{J}$ with restricted sums to the matrix $J$ with unrestricted will have an estimate of $O(\lambda O_{\rm max}^2 \frac{\log(N)M  }{N})$. This argument can be repeated for the subsequent sums in $f_{[\lambda]}(\tilde{J})$. This also gives us a rough estimate when the high-temperature expansion should fail due to finite $N$ effects.

\subsection{Numerical Check, and Approach To Spin Glass}\label{sec:breakdown}
Finally, we examine a model that allows us to probe the limits of our high-temperature expansion in Eq.~\eqref{eq:partition_classical_rotor}, in particular Eq. \eqref{eq:trace_rule_with_error}. We define our coupling matrix as:
\begin{align}
J_{ij}&=\sum_{k=1}^{M}O_{ik}O_{jk}e_k,\qquad e_{k}=-2+4\frac{k}{M}\,.\\ 
\nonumber O_{ij}\text{ random orthogonal } & N\times N\text{ matrix, selected with Haar measure. }
\end{align}
We note that $M$ sets the rank of the coupling matrix $J_{ij}$. We consider two cases:
\begin{enumerate}
    \item $M = \text{ceil}(\sqrt{2N})\,,$
    \item $M = N\,.$     
\end{enumerate} 
The high temperature expansion from Eq. \eqref{eq:partition_classical_rotor} gives the energy per non-zero eigenvalue in the $N\rightarrow\infty$ limit:
\begin{align}\label{eq:Non_trivial_test}
    -\frac{\partial}{\partial\beta}\frac{\ln Z}{\text{rnk}(J)}&=-\frac{1}{8}\frac{\partial}{\partial\beta}\int\displaylimits_{-2}^{2}dx\Bigg(x\beta-3\ln\Big(1+\frac{x\beta}{3}\Big)\Bigg)\,\nonumber\\
    &=\frac{3}{2\beta}-\frac{9}{4\beta^2}\arctan\Big(\frac{2\beta}{3}\Big)\,.
\end{align}
Where rnk$(J)$ stands for the rank of the coupling matrix counting the number of non-zero eigenvalues. We plot the two cases given above in Fig.~\ref{fig:energy_spin_glass_and_not}, comparing against the Eq.~\eqref{eq:Non_trivial_test}. We see the case where the eigenvalues scale with the root of the system size agrees well with the result from the high-temperature expansion, while the case where the number of eigenvalues of the coupling matrix scales with system size has dramatically different behavior.

\begin{figure}
  \center\includegraphics[scale=0.7]{./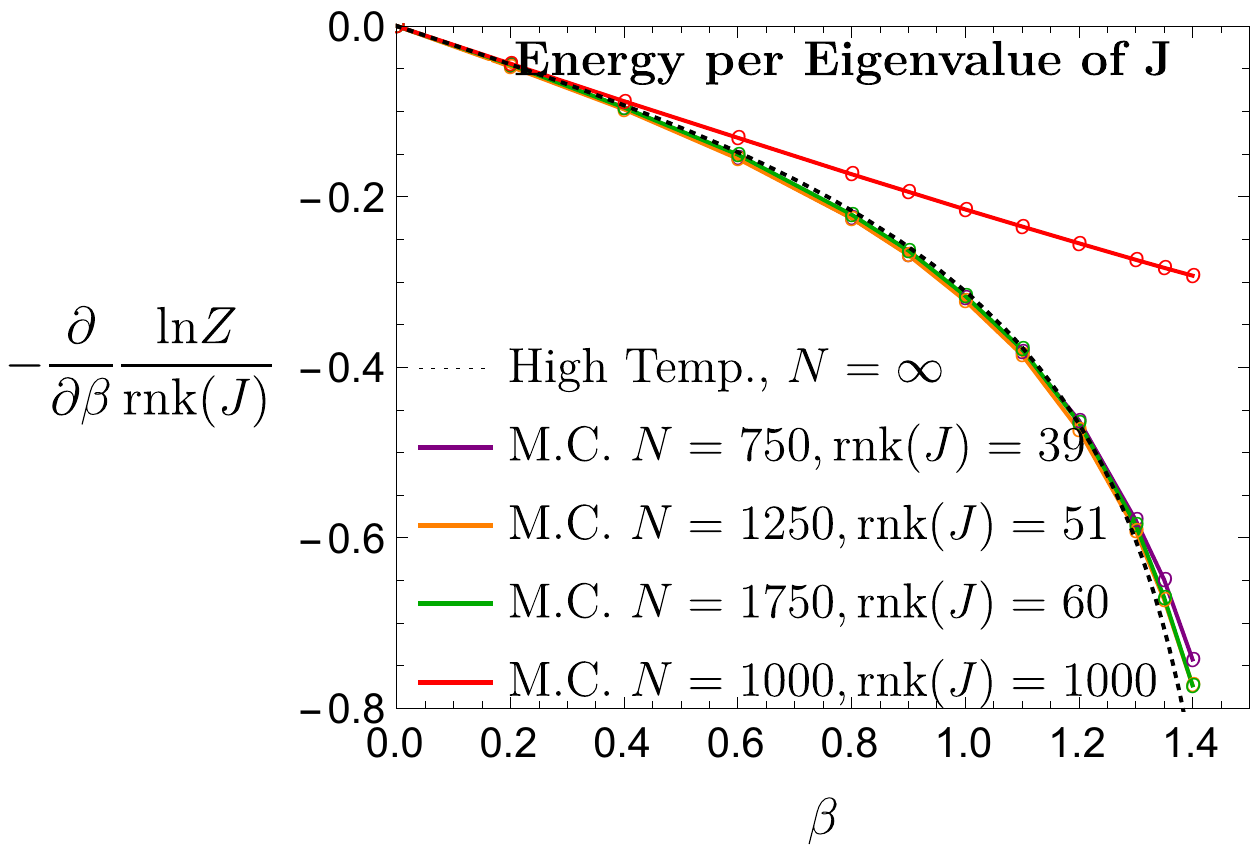}
  \caption{\label{fig:energy_spin_glass_and_not} The numerically determined energy from the heat bath algorithm. For a given $N$, the free energy is determined from a single realization of the couplings. The circles are the determination of the free energy at a specific temperature, lines are linear interpolations. The statistical uncertainty is smaller than can be shown. The black dashed line is the determination of the free energy from the high-temperature expansion, calculated in Eq. \eqref{eq:Non_trivial_test}. The red line shows the behavior when the number of eigenvalues of the coupling matrix scales as the system size, while the other curves correspond to the scale at the square root of the system size. According to the expansion given in Eq.~\eqref{eq:partition_classical_rotor}, all have the same large-$N$ limit for the energy per eigenvalue, but markedly different actual behavior.}
\end{figure}

This example shows that our formula's (Eq.~\eqref{eq:partition_classical_rotor}) validity must also depend on how the rank of the coupling matrix scales with the number of rotors. However, determining the exact conditions is beyond the scope of this paper. 
We do note that at low temperatures, the models display different behavior. While we will not give any analytic justification for this, we do show the behavior of the auto-correlation of the energy of the system as a function of the Monte-Carlo Markov Chain (MCMC) defined by the heat bath algorithm. This is defined as (Ref.~\cite{Sokal1997}):
\begin{align}
    C_{EE}(\tau)&=\frac{1}{n-\tau}\sum_{i=1}^{n-\tau}(E_i-\bar{E})(E_{i+\tau}-\bar{E})\,,\\
    \mathcal{T}_{EE}(\tau)&=\frac{1}{C_{EE}(0)}\sum_{i=1}^{\tau}C_{EE}(i)\,.
\end{align}
The index $i$ labels the successive configurations generated by the MCMC (in our case, this is a full sweep through all spins of the system, applying the heat-bath update rule to each once), that is, the Monte-Carlo time, and $n$ is the total number of configurations generated. The variable $E_i$ is the energy calculated on the $i$-th configuration, and $\bar{E}$ is the estimated average energy calculated on all generated configurations. The variable $\tau$ is the current MC time. $\mathcal{T}_{EE}(\tau)$ is the integrated autocorrelation function. One hopes this function plateaus on intermediate time scales, at a value that gives the minimum decorrelation time. Configurations that are separated by a decent multiple of this decorrelation time can be treated as statistically independent.

So in a thermalizing system, we can expect the integrated auto-correlation function to plateau on long, but not too long scales~\footnote{The integrated auto-correlation function on infinite MC time scales has infinite variance, so that fluctuations will eventually swamp its behavior, Ref.~\cite{Sokal1997}.}. If the system will not thermalize with a local-update algorithm, as seen in spin-glasses, then we can expect the integrated auto-correlation function to never stabilize to a plateau before fluctuations dominate. This is simply the statement that the algorithm is too primitive to find its way out of the local minimum on any reasonable MCMC time scale. Given this fact, the cases where the eigenvalues scale as the number of rotors appears to undergo a phase transition to such a spin-glass, as detected in Fig.~\ref{fig:correlation_time_spin_glass}. However, those where the rank of $J$ over $N$ vanishes appear not to, as the integrated auto-correlation function plateaus reasonably, Fig.~\ref{fig:correlation_time_spin_glass}.

Ref.~\cite{Baldwin:2019dki} argued that the high-temperature phase of generic spin-glasses is expected to be given by a Gaussian partition function, or rather, a Gaussian partition function cannot extend to arbitrarily low temperatures, and the low energy behavior becomes a spin-glass. We have given a recipe for designing non-Gaussian partition functions at high temperatures in Eq.~\eqref{eq:partition_classical_rotor}, but this formula appears to work precisely when there is no spin-glass phase at low temperatures. Fully fleshing this out we leave it to future investigations.

\begin{figure}
 \centering
    \begin{subfigure}[t]{0.47\textwidth}
        \centering
        \includegraphics[width=.9\textwidth]{./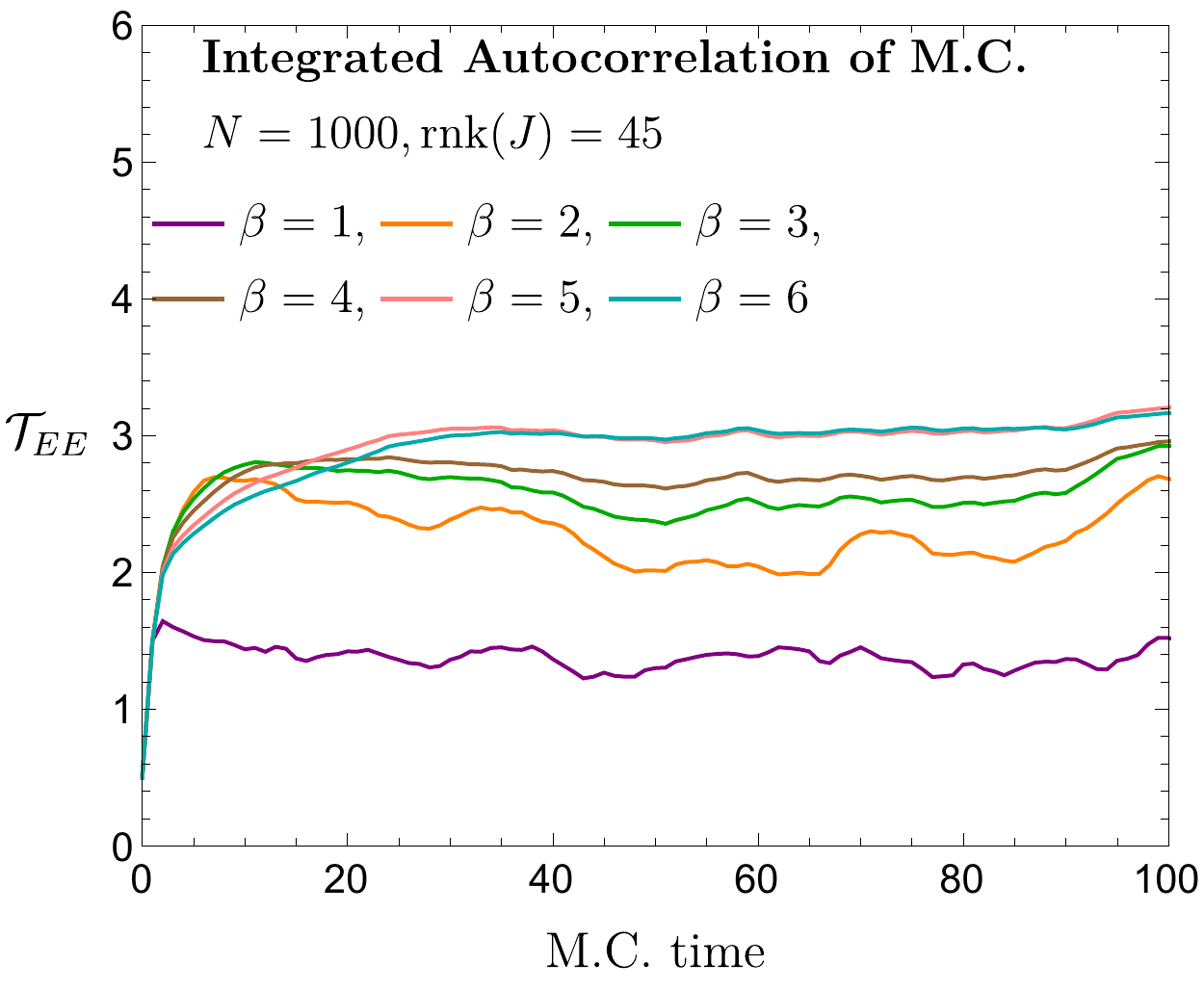}
      \caption{\label{fig:correlation_time_not_spin_glass} The integrated autocorrelation function for the MCMC is used to calculate the expected energy of the system. When a low-rank coupling matrix relative to system size is used, the integrated autocorrelation function plateaus nicely, implying that the decorrelation time between samples is well-defined.}    
    \end{subfigure} \quad
    \begin{subfigure}[t]{0.47\textwidth}
        \centering
        \includegraphics[width=.9\textwidth]{./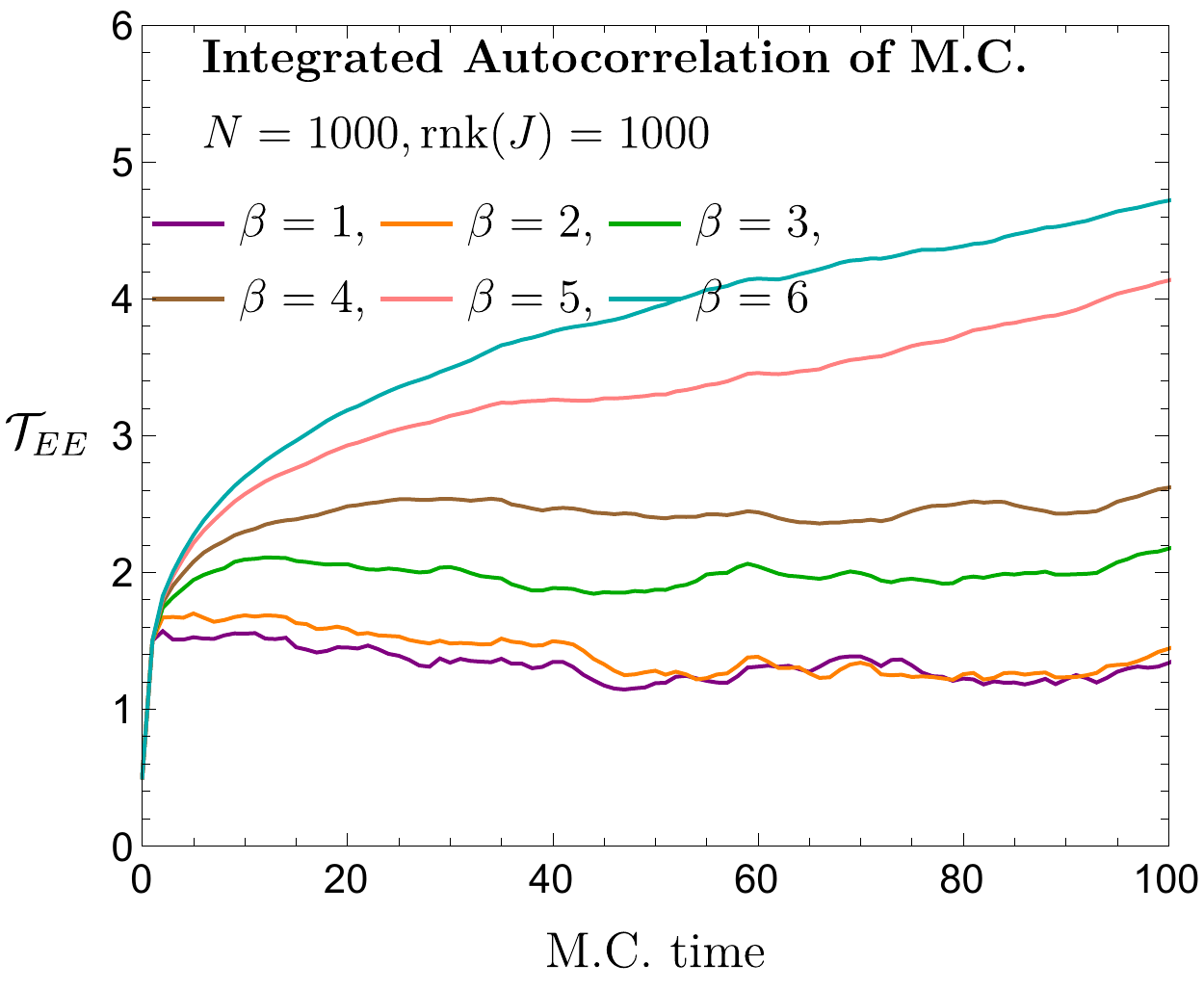}
      \caption{\label{fig:correlation_time_spin_glass}  The integrated autocorrelation function for the MCMC is used to calculate the expected energy of the system. When the rank of the coupling matrix scales with the system size, the integrated autocorrelation function, after for inverse temperatures above $\beta=4$, the integrated auto-correlation function does not plateau, suggesting a long or ill-defined decorrelation time, indicative of a spin glass. }    
    \end{subfigure} 
\end{figure}

\end{document}